\def\bigoh{\mathcal{O}}
\def\PV{\mathop{\rm PV}\!}
\def\phi{\varphi}
\def\epsilon{\varepsilon}
\def\im{\mathop{\rm Im}}
\newtheorem{theorem}{Theorem}[section]
\newtheorem{remark1}[theorem]{Remark}
\newenvironment{remark}{\begin{remark1} \rm}{\end{remark1}}
\journal{Applied Mathematics and Computation}
\begin{document}

\begin{frontmatter}

%% Title, authors and addresses

%% use the tnoteref command within \title for footnotes;
%% use the tnotetext command for the associated footnote;
%% use the fnref command within \author or \address for footnotes;
%% use the fntext command for the associated footnote;
%% use the corref command within \author for corresponding author footnotes;
%% use the cortext command for the associated footnote;
%% use the ead command for the email address,
%% and the form \ead[url] for the home page:
%%
%% \title{Title\tnoteref{label1}}
%% \tnotetext[label1]{}
%% \author{Name\corref{cor1}\fnref{label2}}
%% \ead{email address}
%% \ead[url]{home page}
%% \fntext[label2]{}
%% \cortext[cor1]{}
%% \address{Address\fnref{label3}}
%% \fntext[label3]{}

\title{Computing the confidence levels for a root-mean-square test
       of goodness-of-fit}

%% use optional labels to link authors explicitly to addresses:
%% \author[label1,label2]{<author name>}
%% \address[label1]{<address>}
%% \address[label2]{<address>}

\author{William Perkins}%\ead{perkins@cims.nyu.edu}
\author{Mark Tygert\corref{cor1}}%\ead{tygert@aya.yale.edu}
\author{Rachel Ward}%\ead{rward@cims.nyu.edu}
\cortext[cor1]{Corresponding author.}

\address{Courant Institute of Mathematical Sciences, NYU, 251 Mercer St.,
         New York, NY 10012}

\begin{abstract}
The classic $\chi^2$ statistic for testing goodness-of-fit
has long been a cornerstone of modern statistical practice.
The statistic consists of a sum in which each summand involves 
division by the probability associated with the corresponding bin
in the distribution being tested for goodness-of-fit.
Typically this division should precipitate rebinning to uniformize
the probabilities associated with the bins,
in order to make the test reasonably powerful.
With the now widespread availability of computers,
there is no longer any need for this.
The present paper provides efficient black-box algorithms
for calculating the asymptotic confidence levels of a variant
on the classic $\chi^2$ test which omits the problematic division.
In many circumstances, it is also feasible
to compute the exact confidence levels via Monte Carlo simulation.
\end{abstract}

\begin{keyword}
%% keywords here, in the form: keyword \sep keyword
chi-square \sep goodness of fit \sep confidence \sep significance \sep
test \sep Euclidean norm
\end{keyword}

\end{frontmatter}

%% main text
\section{Introduction}
\label{intro}

A basic task in statistics is to ascertain whether a given set
of independent and identically distributed (i.i.d.)\ draws does not come
from a specified probability distribution
(this specified distribution is known as the ``model'').
In the present paper, we consider the case in which the draws
are discrete random variables, taking values in a finite set.
In accordance with the standard terminology,
we will refer to the possible values of the discrete random variables
as ``bins'' (``categories,'' ``cells,'' and ``classes'' are common synonyms
for ``bins'').

A natural approach to ascertaining whether the i.i.d.\ draws
do not come from the specified probability distribution
uses a root-mean-square statistic.
To construct this statistic,
we estimate the probability distribution over the bins
using the given i.i.d.\ draws, and then measure	the root-mean-square
difference between this empirical distribution
and the specified model distribution;
see, for example, \cite{rao}, page 123 of~\cite{varadhan-levandowsky-rubin},
or Section~\ref{eigs} below.
If the draws do in fact arise from the specified model,
then with high probability this root-mean-square is not large.
Thus, if the root-mean-square statistic is large,
then we can be confident that the draws do not arise
from the specified probability distribution.

Let us denote by $x_{\rm rms}$ the value of the root-mean-square
for the given i.i.d.\ draws;
let us denote by $X_{\rm rms}$ the root-mean-square statistic
constructed for different i.i.d.\ draws that definitely do in fact come
from the specified model distribution.
Then, the significance level $\alpha$ is defined to be the probability
that $X_{\rm rms} \ge x_{\rm rms}$ (viewing $X_{\rm rms}$
--- but not $x_{\rm rms}$ --- as a random variable).
The confidence level that the given i.i.d.\ draws do not arise
from the specified model distribution is the complement
of the significance level, namely $1-\alpha$.

Unfortunately, the confidence levels for the simple root-mean-square statistic
are different for different model probability distributions.
To avoid this seeming inconvenience (at least asymptotically),
one may weight the average in the root-mean-square
by the inverses of the model probabilities associated with the various bins,
obtaining the classic~$\chi^2$ statistic;
see, for example, \cite{pearson} or Remark~\ref{classic_remark} below.
However, with the now widespread availability of computers,
direct use of the root-mean-square statistic has become feasible
(and actually turns out to be very convenient).
The present paper provides efficient black-box algorithms
for computing the confidence levels for any specified model distribution,
in the limit of large numbers of draws.
Calculating confidence levels for small numbers of draws
via Monte Carlo can also be practical.

The simple statistic described above would seem to be more natural
than the standard $\chi^2$ statistic of~\cite{pearson}, is typically easier
to use (since it does not require any rebinning of data),
and is more powerful in many circumstances,
as we demonstrate both in Section~\ref{brief} below
and more extensively in a forthcoming paper.
Even more powerful is the combination of the root-mean-square statistic
and an asymptotically equivalent variation of the $\chi^2$ statistic,
such as the (log)likelihood-ratio or ``$G^2$'' statistic;
the (log)likelihood-ratio and $\chi^2$ statistics
are asymptotically equivalent when the draws arise from the model,
while the (log)likelihood-ratio can be more powerful than $\chi^2$
for small numbers of draws (see, for example, \cite{rao}).
The rest of the present article has the following structure:
Section~\ref{eigs} details the statistic discussed above,
expressing the confidence levels for the associated goodness-of-fit test
in a form suitable for computation.
Section~\ref{contours} discusses the most involved part of the computation
of the confidence levels, computing the cumulative distribution function
of the sum of the squares of independent centered Gaussian random variables.
Section~\ref{explicit} summarizes the method
for computing the confidence levels of the root-mean-square statistic.
Section~\ref{numerical} applies the method to several examples.
Section~\ref{brief} very briefly illustrates the power of the root-mean-square.
Section~\ref{conclusion} draws some conclusions and proposes directions
for further research.

\section{The simple statistic}
\label{eigs}

This section details the root-mean-square statistic discussed briefly
in Section~\ref{intro}, and determines its probability distribution
in the limit of large numbers of draws, assuming that the draws do in fact
come from the specified model.
The distribution determined in this section yields the confidence levels
(in the limit of large numbers of draws):
Given a value~$x$ for the root-mean-square statistic constructed
from i.i.d.\ draws coming from an unknown distribution,
the confidence level that the draws do not come from the specified model is
the probability that the root-mean-square statistic is less than $x$
when constructed from i.i.d.\ draws that do come from the model distribution.

To begin, we set notation and form the statistic $X$ to be analyzed.
Given $n$ bins, numbered $1$,~$2$, \dots, $n-1$,~$n$,
we denote by $p_1$,~$p_2$, \dots, $p_{n-1}$,~$p_n$ the probabilities
associated with the respective bins under the specified model;
of course, $\sum_{k=1}^n p_k = 1$.
To obtain a draw conforming to the model, we select at random one
of the $n$ bins, with probabilities $p_1$,~$p_2$, \dots, $p_{n-1}$,~$p_n$.
We perform this selection independently $m$ times.
For $k = 1$,~$2$, \dots, $n-1$,~$n$,
we denote by $Y_k$ the fraction of times that we choose bin~$k$
(that is, $Y_k$ is the number of times that we choose bin~$k$, divided by $m$);
obviously, $\sum_{k=1}^n Y_k = 1$.
We define $X_k$ to be $\sqrt{m}$ times the difference of $Y_k$
from its expected value, that is,
\begin{equation}
\label{scaled}
X_k = \sqrt{m} \, (Y_k - p_k)
\end{equation}
for $k = 1$,~$2$, \dots, $n-1$,~$n$.
Finally, we form the statistic
\begin{equation}
\label{statistic}
X = \sum_{k=1}^n X_k^2,
\end{equation}
and now determine its distribution in the limit of large $m$.
($X$ is the square of the root-mean-square statistic
$\sqrt{ \sum_{k=1}^n (m Y_k - m p_k)^2 / m }$.
Since the square root is a monotonically increasing function,
the confidence levels are the same whether determined via $X$
or via $\sqrt{X}$; for convenience, we focus on $X$ below.)

\begin{remark}
\label{classic_remark}
The classic $\chi^2$ test for goodness-of-fit of~\cite{pearson}
replaces~(\ref{statistic}) with the statistic
\begin{equation}
\label{classic}
\chi^2 = \sum_{k=1}^n \frac{X_k^2}{p_k},
\end{equation}
where $X_1$,~$X_2$, \dots, $X_{n-1}$,~$X_n$ are the same as
in~(\ref{scaled}) and~(\ref{statistic}).
$\chi^2$ defined in~(\ref{classic}) has the advantage that
its confidence levels are the same for every model distribution,
independent of the values of $p_1$,~$p_2$, \dots, $p_{n-1}$,~$p_n$,
in the limit of large numbers of draws.
In contrast, using $X$ defined in~(\ref{statistic})
requires computing its confidence levels anew for every different model.
\end{remark}

The multivariate central limit theorem shows that
the joint distribution of $X_1$,~$X_2$, \dots, $X_{n-1}$,~$X_n$
converges in distribution as $m \to \infty$,
with the limiting generalized probability density proportional to
\begin{equation}
\label{genden}
\exp\left(-\sum_{k=1}^n \frac{x_k^2}{2p_k} \right)
\;\; \delta\left(\sum_{k=1}^n x_k\right),
\end{equation}
where $\delta$ is the Dirac delta;
see, for example, \cite{moore-spruill} or Chapter~25 and Example~15.3
of~\cite{kendall-stuart-ord-arnold}.
The generalized probability density~(\ref{genden})
is a centered multivariate Gaussian concentrated on a hyperplane
passing through the origin (the hyperplane consists of the points
such that $\sum_{k=1}^n x_k = 0$);
the restriction of the generalized probability density~(\ref{genden})
to the hyperplane through the origin is also a centered multivariate Gaussian.
Thus, the distribution of $X$ defined in~(\ref{statistic}) converges
as $m \to \infty$ to the distribution of the sum of the squares
of $n-1$ independent Gaussian random variables of mean zero whose variances are
the variances of the restricted multivariate Gaussian distribution
along its principal axes;
see, for example, \cite{moore-spruill} or Chapter~25
of~\cite{kendall-stuart-ord-arnold}.
Given these variances, the following section describes an efficient algorithm
for computing the probability that the associated sum of squares is less than
any particular value; this probability is the desired confidence level,
in the limit of large numbers of draws.
See Sections~\ref{explicit} and~\ref{numerical} for further details.

To compute the variances of the restricted multivariate Gaussian distribution
along its principal axes,
we multiply the diagonal matrix $D$ whose diagonal entries are
$1/p_1$,~$1/p_2$, \dots, $1/p_{n-1}$,~$1/p_n$
from both the left and the right by the projection matrix $P$
whose entries are
\begin{equation}
P_{j,k} = \left\{ \begin{array}{rl}
                  1 - \frac{1}{n}, & j = k \\
                  -\frac{1}{n},    & j \ne k
                  \end{array} \right.
\end{equation}
for $j,k = 1$,~$2$, \dots, $n-1$,~$n$
(upon application to a vector, $P$ projects onto the orthogonal complement
of the subspace consisting of every vector whose entries are all the same).
The entries of this product $B = PDP$ are
\begin{equation}
\label{diagonalizer}
B_{j,k} = \left\{ \begin{array}{rl}
                  \frac{1}{p_k}
                  - \frac{1}{n} \left( \frac{1}{p_j}+\frac{1}{p_k} \right)
                  + \frac{1}{n^2} \sum_{l=1}^n \frac{1}{p_l}, & j = k \\
                  - \frac{1}{n} \left( \frac{1}{p_j}+\frac{1}{p_k} \right)
                  + \frac{1}{n^2} \sum_{l=1}^n \frac{1}{p_l}, & j \ne k
                  \end{array} \right.
\end{equation}
for $j,k = 1$,~$2$, \dots, $n-1$,~$n$.
Clearly, $B$ is self-adjoint.
By construction, exactly one of the eigenvalues of $B$ is 0.
The other eigenvalues of $B$ are the multiplicative inverses
of the desired variances
of the restricted multivariate Gaussian distribution along its principal axes.

\begin{remark}
\label{faster}
The $n \times n$ matrix $B$ defined in~(\ref{diagonalizer}) is the sum
of a diagonal matrix and a low-rank matrix.
The methods of~\cite{gu-eisenstat94,gu-eisenstat95}
for computing the eigenvalues of such a matrix~$B$ require
only either $\bigoh(n^2)$ or $\bigoh(n)$ floating-point operations.
The $\bigoh(n^2)$ methods of~\cite{gu-eisenstat94,gu-eisenstat95} are usually
more efficient than the $\bigoh(n)$ method of~\cite{gu-eisenstat95},
unless $n$ is impractically large.
\end{remark}

\begin{remark}
\label{estimation}
It is not hard to accommodate homogeneous linear constraints
of the form $\sum_{k=1}^n c_k \, x_k = 0$
(where $c_1$,~$c_2$, \dots, $c_{n-1}$,~$c_n$ are real numbers) in addition
to the requirement that $\sum_{k=1}^n x_k = 0$.
Accounting for any additional constraints is entirely analogous
to the procedure detailed above for the particular constraint
that $\sum_{k=1}^n x_k = 0$.
The estimation of parameters from the data in order to specify the model
can impose such extra homogeneous linear constraints;
see, for example, Chapter~25 of~\cite{kendall-stuart-ord-arnold}.
A detailed treatment is available in~\cite{perkins-tygert-ward}.
\end{remark}

\section{The sum of the squares of independent centered
         Gaussian random variables}
\label{contours}

This section describes efficient algorithms for evaluating
the cumulative distribution function (cdf) of the sum
of the squares of independent centered Gaussian random variables.
The principal tool is the following theorem,
expressing the cdf as an integral suitable for evaluation via quadratures
(see, for example, Remark~\ref{quadratures} below).

\begin{theorem}
Suppose that $n$ is a positive integer,
$X_1$,~$X_2$, \dots, $X_{n-1}$,~$X_n$ are i.i.d.\ Gaussian random variables
of zero mean and unit variance,
and $\sigma_1$,~$\sigma_2$, \dots, $\sigma_{n-1}$,~$\sigma_n$
are positive real numbers.
Suppose in addition that $X$ is the random variable
\begin{equation}
\label{summed}
X = \sum_{k=1}^n |\sigma_k \, X_k|^2.
\end{equation}

Then, the cumulative distribution function (cdf) $P$ of $X$ is
\begin{equation}
\label{contoured}
P(x) = \int_0^\infty \im\left(
       \frac{e^{1-t} \, e^{it\sqrt{n}}}
       {\pi \, \bigl( t - \frac{1}{1-i\sqrt{n}} \bigr)
        \prod_{k = 1}^n \sqrt{1-2(t-1)\sigma_k^2/x+2it\sigma_k^2\sqrt{n}/x}}
       \right) \, dt
\end{equation}
for any positive real number $x$,
and $P(x) = 0$ for any nonpositive real number $x$.
The square roots in~(\ref{contoured}) denote the principal branch,
and $\,\im$ takes the imaginary part.
\end{theorem}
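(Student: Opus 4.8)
The plan is to start from the Bromwich (Laplace) inversion formula for the cdf, fold the integration path onto a half-line by Schwarz reflection, and then rotate that half-line onto the ray $t \mapsto \gamma(t) := \bigl((1-t) + it\sqrt{n}\bigr)/x$, $t \ge 0$; this rotation is exactly the change of variables that produces~(\ref{contoured}). First I would record the Laplace transform of $X$ in~(\ref{summed}): since $X_1$,~$X_2$, \dots, $X_n$ are i.i.d.\ standard Gaussians, $\mathbb{E}\bigl[e^{-sX}\bigr] = \prod_{k=1}^n \mathbb{E}\bigl[e^{-s\sigma_k^2 X_k^2}\bigr] = \prod_{k=1}^n (1 + 2s\sigma_k^2)^{-1/2}$ for every $s$ with $\Re s \ge 0$, each square root being the principal branch (then $\Re(1 + 2s\sigma_k^2) \ge 1$). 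For $x \le 0$ we have $P(x) = 0$ since $X > 0$ almost surely. For $x > 0$, the standard Bromwich representation $\mathbf{1}_{\{X < x\}} = \frac{1}{2\pi i}\int_{a - i\infty}^{a + i\infty} \frac{e^{s(x-X)}}{s}\,ds$ (valid for any $a > 0$), together with an interchange of expectation and integration, gives
\[
  P(x) = \frac{1}{2\pi i}\int_{a-i\infty}^{a+i\infty} h(s)\,ds,
  \qquad
  h(s) := \frac{e^{sx}}{s\,\prod_{k=1}^n \sqrt{1 + 2s\sigma_k^2}} .
\]
The interchange is the one genuinely measure-theoretic point; it is justified by the uniform (in the truncation level and in $X$) boundedness of the truncated Bromwich integrals of $e^{s(x-X)}/s$, followed by dominated convergence, or by first smoothing the indicator.

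Next I would fold the path. The function $h$ is analytic on $\mathbb{C}\setminus(-\infty,0]$ and real on $(0,\infty)$, hence $h(\bar s) = \overline{h(s)}$; writing $s = a + iu$ turns the displayed formula into $P(x) = \frac{1}{\pi}\int_0^\infty \Re\, h(a+iu)\,du = \frac{1}{\pi}\,\Re\int_0^\infty h(a+iu)\,du$, the last integral converging absolutely because $h(a+iu) = \bigoh(|u|^{-1-n/2})$. I would then deform the vertical ray $\{a + iu : u \ge 0\}$ onto $\Gamma := \{\gamma(t) : t \ge 0\}$. Both rays leave the positive real axis (from $a$ and from $1/x$) into the closed upper half-plane, the planar region they sweep out meets the real axis only in the segment between $a$ and $1/x$ (which lies in $(0,\infty)$), and along $\gamma$ one has $\im\bigl(1 + 2\gamma(t)\sigma_k^2\bigr) = 2t\sigma_k^2\sqrt{n}/x \ge 0$; therefore neither the pole of $h$ at the origin nor any branch point (all lying in $(-\infty,0]$) is crossed, every factor $\sqrt{1+2s\sigma_k^2}$ keeps its principal branch, and — since $|e^{sx}| = e^{1-t}$ along $\Gamma$ while $|\gamma(t)|$ grows linearly — the horizontal arc at height $R$ that closes the two rays contributes $\bigoh(R^{-n/2}) \to 0$. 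Hence $\int_0^\infty h(a+iu)\,du = -i\int_\Gamma h(s)\,ds = -i\int_0^\infty h(\gamma(t))\,\gamma'(t)\,dt$.

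Finally I would carry out the substitution. From $\gamma(t)\,x = (1-t) + it\sqrt{n}$ we get $e^{\gamma(t)x} = e^{1-t}\,e^{it\sqrt{n}}$; factoring $(1-t)+it\sqrt{n} = -(1-i\sqrt{n})\bigl(t - \tfrac{1}{1-i\sqrt{n}}\bigr)$ and using $\gamma'(t) = (-1+i\sqrt{n})/x$ yields the elementary identity $\gamma'(t)/\gamma(t) = \bigl(t - \tfrac{1}{1-i\sqrt{n}}\bigr)^{-1}$; also $1 + 2\gamma(t)\sigma_k^2 = 1 - 2(t-1)\sigma_k^2/x + 2it\sigma_k^2\sqrt{n}/x$. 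Thus $\tfrac{1}{\pi}h(\gamma(t))\gamma'(t)$ is precisely the integrand of~(\ref{contoured}) before the $\im$, so $P(x) = \tfrac{1}{\pi}\Re\bigl(-i\int_0^\infty h(\gamma(t))\gamma'(t)\,dt\bigr) = \tfrac{1}{\pi}\im\int_0^\infty h(\gamma(t))\gamma'(t)\,dt = \int_0^\infty \im\bigl(\tfrac{1}{\pi}h(\gamma(t))\gamma'(t)\bigr)\,dt$, which is~(\ref{contoured}); here I used $\Re(-iz) = \im z$ and the fact that $\im$ passes through the absolutely convergent integral. The main obstacle I anticipate is the contour rotation of the previous paragraph: carefully verifying that it crosses neither the origin nor any of the branch cuts, that all the principal-branch square roots stay mutually consistent throughout the deformation, and that the connecting arc at infinity vanishes; everything else is bookkeeping.
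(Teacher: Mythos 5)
Your proof is correct and follows essentially the same route as the paper: invert a transform of $X$ to express the cdf as a line integral, then deform onto the ray through $1/x$ in the direction $-1+i\sqrt{n}$, which is exactly the paper's contours~(\ref{left})--(\ref{right}) rewritten in the Laplace variable $s=-i\tau$ after the same rescaling by $x$. The only difference is cosmetic: you invert the Laplace transform via Bromwich and fold by Schwarz reflection before deforming, whereas the paper inverts the characteristic function via a principal-value (Gil--Pelaez) formula and keeps both conjugate rays until combining them into the imaginary part; your variant has the minor advantage of avoiding the principal value and the $\tfrac{1}{2}$ bookkeeping at the origin.
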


\begin{proof}
For any $k = 1$,~$2$, \dots, $n-1$,~$n$,
the characteristic function of $|X_k|^2$ is
\begin{equation}
\phi_1(t) = \frac{1}{\sqrt{1-2it}},
\end{equation}
using the principal branch of the square root.
By the independence of $X_1$,~$X_2$, \dots, $X_{n-1}$,~$X_n$,
the characteristic function of the random variable $X$ defined
in~(\ref{summed}) is therefore
\begin{equation}
\phi(t) = \prod_{k=1}^n \phi_1(t\sigma_k^2)
        = \frac{1}{\prod_{k = 1}^n \sqrt{1-2it\sigma_k^2}}.
\end{equation}
The probability density function of $X$ is therefore
\begin{equation}
p(x) = \frac{1}{2\pi} \int_{-\infty}^\infty e^{-itx} \, \phi(t) \, dt
     = \frac{1}{2\pi} \int_{-\infty}^\infty
       \frac{e^{-itx}}{\prod_{k = 1}^n \sqrt{1-2it\sigma_k^2}} \, dt
\end{equation}
for any real number $x$, and the cdf of $X$ is
\begin{equation}
\label{cdf}
P(x) = \int_{-\infty}^x p(y) \, dy
     = \frac{1}{2} + \frac{i}{2\pi} \PV\int_{-\infty}^\infty
       \frac{e^{-itx}}{t \, \prod_{k = 1}^n \sqrt{1-2it\sigma_k^2}} \, dt
\end{equation}
for any real number $x$, where $\PV$\,\ denotes the principal value.

It follows from the fact that $X$ is almost surely positive
that the cdf $P(x)$ is identically zero for $x \le 0$;
there is no need to calculate the cdf for $x \le 0$.
Substituting $t \mapsto t/x$ in~(\ref{cdf}) yields that the cdf is
\begin{equation}
\label{cdf2}
P(x) = \frac{1}{2} + \frac{i}{2\pi} \PV\int_{-\infty}^\infty
       \frac{e^{-it}}{t \, \prod_{k = 1}^n \sqrt{1-2it\sigma_k^2/x}} \, dt
\end{equation}
for any positive real number $x$,
where again $\PV$\,\ denotes the principal value.
The branch cuts for the integrand in~(\ref{cdf2})
are all on the lower half of the imaginary axis.

Though the integration in~(\ref{cdf2}) is along $(-\infty,\infty)$,
we may shift contours and instead integrate along the rays
\begin{equation}
\label{left}
\left\{ (-\sqrt{n} - i)t + i \;:\; t \in (0,\infty) \right\}
\end{equation}
and
\begin{equation}
\label{right}
\left\{ (\sqrt{n} - i)t + i \;:\; t \in (0,\infty) \right\},
\end{equation}
obtaining from~(\ref{cdf2}) that
\begin{multline}
\label{shifted_contours}
P(x) = \frac{i}{2\pi} \int_0^\infty \left(
       \frac{e^{1-t} \, e^{-it\sqrt{n}}}
       {\bigl( t - \frac{1}{1+i\sqrt{n}} \bigr)
        \prod_{k = 1}^n \sqrt{1-2(t-1)\sigma_k^2/x-2it\sigma_k^2\sqrt{n}/x}}
       \right. \\
       - \left. \frac{e^{1-t} \, e^{it\sqrt{n}}}
       {\bigl( t - \frac{1}{1-i\sqrt{n}} \bigr)
        \prod_{k = 1}^n \sqrt{1-2(t-1)\sigma_k^2/x+2it\sigma_k^2\sqrt{n}/x}}
       \right) \, dt
\end{multline}
for any positive real number $x$.
Combining~(\ref{shifted_contours}) and the definition of ``$\im$''
yields~(\ref{contoured}).
\end{proof}

\begin{remark}
We chose the contours~(\ref{left}) and~(\ref{right})
so that the absolute value of the expression under the square root
in~(\ref{contoured}) is greater than $\sqrt{n/(n+1)}$.
Therefore,
\begin{equation}
\left| \prod_{k = 1}^n
       \sqrt{1 - 2(t-1)\sigma_k^2/x + 2it\sigma_k^2\sqrt{n}/x} \right|
> \left(\frac{n}{n+1}\right)^{n/4} > \frac{1}{e^{1/4}}
\end{equation}
for any $t \in (0,\infty)$ and any $x \in (0,\infty)$.
Thus, the integrand in~(\ref{contoured}) is never large for $t \in (0,\infty)$.
\end{remark}

\begin{remark}
The integrand in~(\ref{contoured}) decays exponentially fast,
at a rate independent of the values
of $\sigma_1$,~$\sigma_2$, \dots, $\sigma_{n-1}$,~$\sigma_n$, and $x$
(see the preceding remark).
\end{remark}

\begin{remark}
\label{quadratures}
An efficient means of evaluating~(\ref{contoured}) numerically
is to employ adaptive Gaussian quadratures;
see, for example, Section~4.7 of~\cite{press-teukolsky-vetterling-flannery}.
To attain double-precision accuracy (roughly 15-digit precision),
the domain of integration for $t$ in~(\ref{contoured}) need be only $(0,40)$
rather than the whole $(0,\infty)$.
Good choices for the lowest orders of the quadratures used in the adaptive
Gaussian quadratures are 10 and 21, for double-precision accuracy.
\end{remark}

\begin{remark}
For a similar, more general approach, see~\cite{Rice}.
For alternative approaches, see~\cite{duchesne-micheaux}.
Unlike these alternatives, the approach of the present section has
an upper bound on its required number of floating-point operations
that depends only on the number~$n$ of bins and on the precision~$\epsilon$
of computations, not on the values of
$\sigma_1$,~$\sigma_2$, \dots, $\sigma_{n-1}$,~$\sigma_n$, or $x$.
Indeed, it is easy to see that the numerical evaluation
of~(\ref{contoured}) theoretically requires
$\bigoh(n \, \ln^2(\sqrt{n}/\epsilon))$ quadrature nodes:
the denominator of the integrand in~(\ref{contoured}) cannot oscillate
more than $n+1$ times (once for each ``pole'') as $t$ ranges
from 0 to $\infty$, while the numerator of the integrand cannot oscillate
more than $\sqrt{n} \, \ln(2\sqrt{n}/\epsilon)$ times
as $t$ ranges from 0 to $\ln(2\sqrt{n}/\epsilon)$;
furthermore, the domain of integration for $t$ in~(\ref{contoured}) need be
only $(0,\,\ln(2\sqrt{n}/\epsilon))$ rather than the whole $(0,\infty)$.
In practice, using several hundred quadrature nodes produces double-precision
accuracy (roughly 15-digit precision);
see, for example, Section~\ref{numerical} below.
Also, the observed performance is similar when subtracting
the imaginary unit $i$ from the contours~(\ref{left}) and~(\ref{right}).
\end{remark}

\section{A procedure for computing the confidence levels}
\label{explicit}
An efficient method for calculating the confidence levels
in the limit of large numbers of draws proceeds as follows.
Given i.i.d.\ draws from any distribution --- not necessarily
from the specified model
--- we can form the associated statistic $X$ defined in~(\ref{statistic})
and~(\ref{scaled}); in the limit of large numbers of draws,
the confidence level that the draws do not arise from the model
is then just the cumulative distribution function $P$
in~(\ref{contoured}) evaluated at $x = X$,
with $\sigma_k^2$ in~(\ref{contoured})
being the inverses of the positive eigenvalues
of the matrix $B$ defined in~(\ref{diagonalizer}) ---
after all, $P(x)$ is then the probability that $x$ is greater than
the sum of the squares of independent centered Gaussian random variables
whose variances are the multiplicative inverses
of the positive eigenvalues of $B$.
Remark~\ref{quadratures} above describes an efficient means
of evaluating $P(x)$ numerically.

\clearpage

\section{Numerical examples}
\label{numerical}

This section illustrates the performance of the algorithm
of Section~\ref{explicit}, via several numerical examples.

Below, we plot the complementary cumulative distribution function
of the square of the root-mean-square statistic whose probability distribution
is determined in Section~\ref{eigs}, in the limit of large numbers of draws.
This is the distribution of the statistic $X$ defined in~(\ref{statistic})
when the i.i.d.\ draws used to form $X$ come from the same model distribution
$p_1$,~$p_2$, \dots, $p_{n-1}$,~$p_n$ used in~(\ref{scaled}) for defining $X$.
In order to evaluate the cumulative distribution function (cdf)~$P$,
we apply adaptive Gaussian quadratures
to the integral in~(\ref{contoured}) as described in Section~\ref{contours},
obtaining $\sigma_k$ in~(\ref{contoured}) via the algorithm
described in Section~\ref{eigs}.

In applications to goodness-of-fit testing,
if the statistic $X$ from~(\ref{statistic}) takes on a value~$x$,
then the confidence level that the draws do not arise
from the model distribution is the cdf~$P$ in~(\ref{contoured})
evaluated at $x$;
the significance level that the draws do not arise
from the model distribution is therefore $1-P(x)$.
Figures~1 and~2 plot the significance level $(1-P(x))$ versus $x$
for six example model distributions (examples a, b, c, d, e, f).
Table~3 provides formulae for the model distributions used in the six examples.
Tables~1 and~2 summarize the computational costs required
to attain at least 9-digit absolute accuracy for the plots in Figures~1 and~2,
respectively. Each plot displays $1-P(x)$ at 100 values for $x$.
Figure~2 focuses on the tails of the distributions, corresponding
to suitably high confidence levels.

The following list describes the headings of the tables:
\begin{itemize}
\item $n$ is the number of bins/categories/cells/classes in Section~2
($p_1$,~$p_2$, \dots, $p_{n-1}$,~$p_n$ are the probabilities of drawing
the corresponding bins under the specified model distribution).
\item $l$ is the maximum number of quadrature nodes required
in any of the 100 evaluations of $1-P(x)$ displayed in each plot
of Figures~1 and~2.
\item $t$ is the total number of seconds required to perform
the quadratures for all 100 evaluations of $1-P(x)$ displayed in each plot
of Figures~1 and~2.
\item $p_k$ is the probability associated with bin $k$
($k = 1$,~$2$, \dots, $n-1$,~$n$) in Section~\ref{eigs}.
The constants $C_{\rm (a)}$,~$C_{\rm (b)}$, $C_{\rm (c)}$, $C_{\rm (d)}$,
$C_{\rm (e)}$,~$C_{\rm (f)}$ in Table~\ref{tab3} are the positive real numbers
chosen such that $\sum_{k=1}^n p_k = 1$.
For any real number $x$, the floor $\lfloor x \rfloor$ is
the greatest integer less than or equal to $x$;
the probability distributions for examples (c) and (d) involve the floor.
\end{itemize}

We used Fortran~77 and ran all examples on one core
of a 2.2~GHz Intel Core~2 Duo microprocessor with 2~MB of L2 cache.
Our code is compliant with the IEEE double-precision standard
(so that the mantissas of variables have approximately one bit of precision
less than 16 digits, yielding a relative precision of about 2E--16).
We diagonalized the matrix $B$ defined in~(\ref{diagonalizer})
using the Jacobi algorithm
(see, for example, Chapter~8 of~\cite{golub-van_loan}),
not taking advantage of Remark~\ref{faster};
explicitly forming the entries of the matrix $B$
defined in~(\ref{diagonalizer}) can incur a numerical error
of at most the machine precision (about 2E--16)
times $\max_{1 \le k \le n} p_k / \min_{1 \le k \le n} p_k$,
yielding 9-digit accuracy or better for all our examples.
A future article will exploit the interlacing properties of eigenvalues,
as in~\cite{gu-eisenstat94}, to obtain higher precision.
Of course, even 5-digit precision would suffice
for most statistical applications;
however, modern computers can produce high accuracy very fast,
as the examples in this section illustrate.

\begin{figure}
\begin{center}
\hspace{.22in}
\subfloat[]{
\hspace{-.72in}\rotatebox{-90}{\scalebox{.32}{\includegraphics{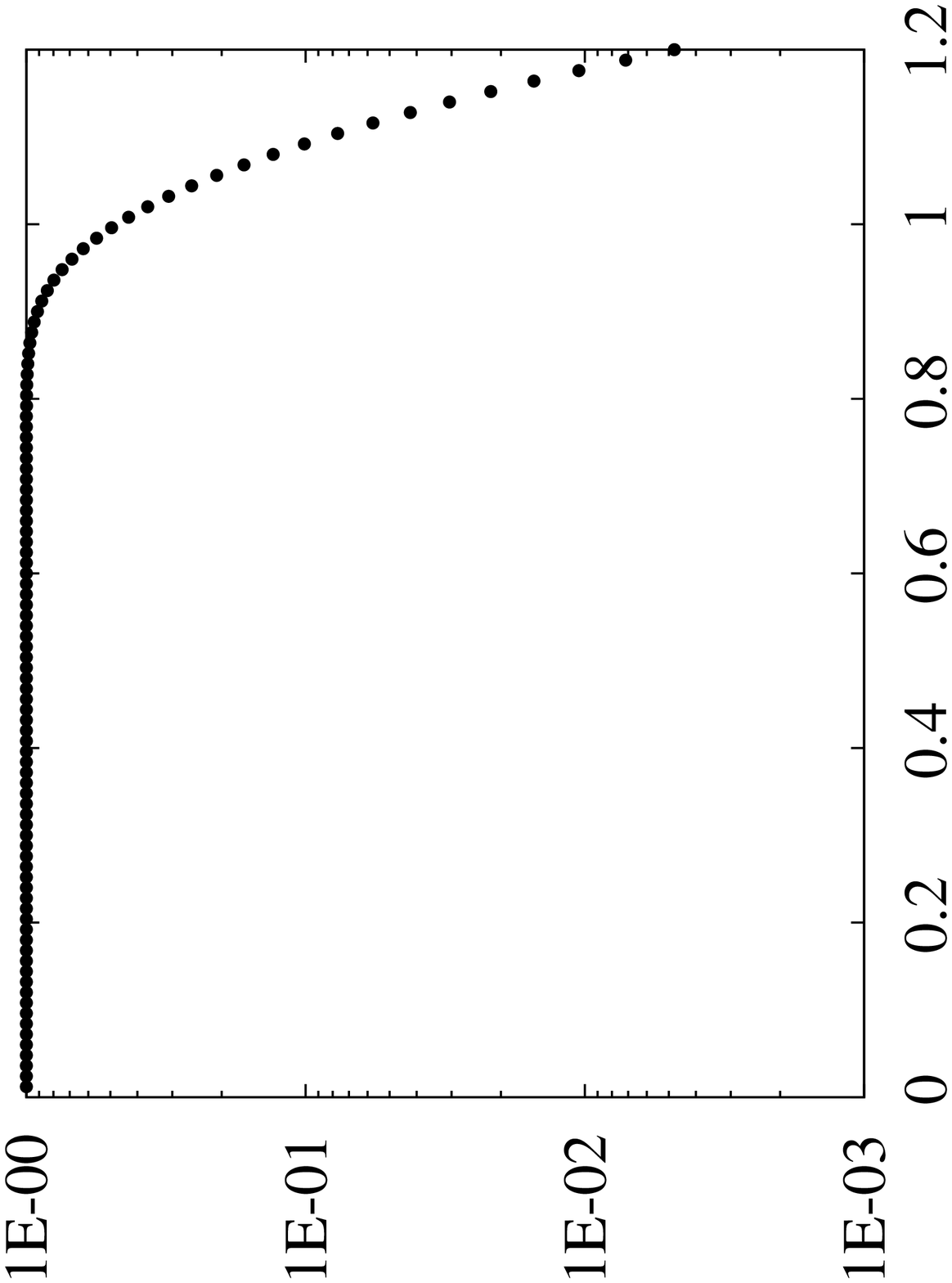}}}}
\hfill
\subfloat[]{
\hspace{-.72in}\rotatebox{-90}{\scalebox{.32}{\includegraphics{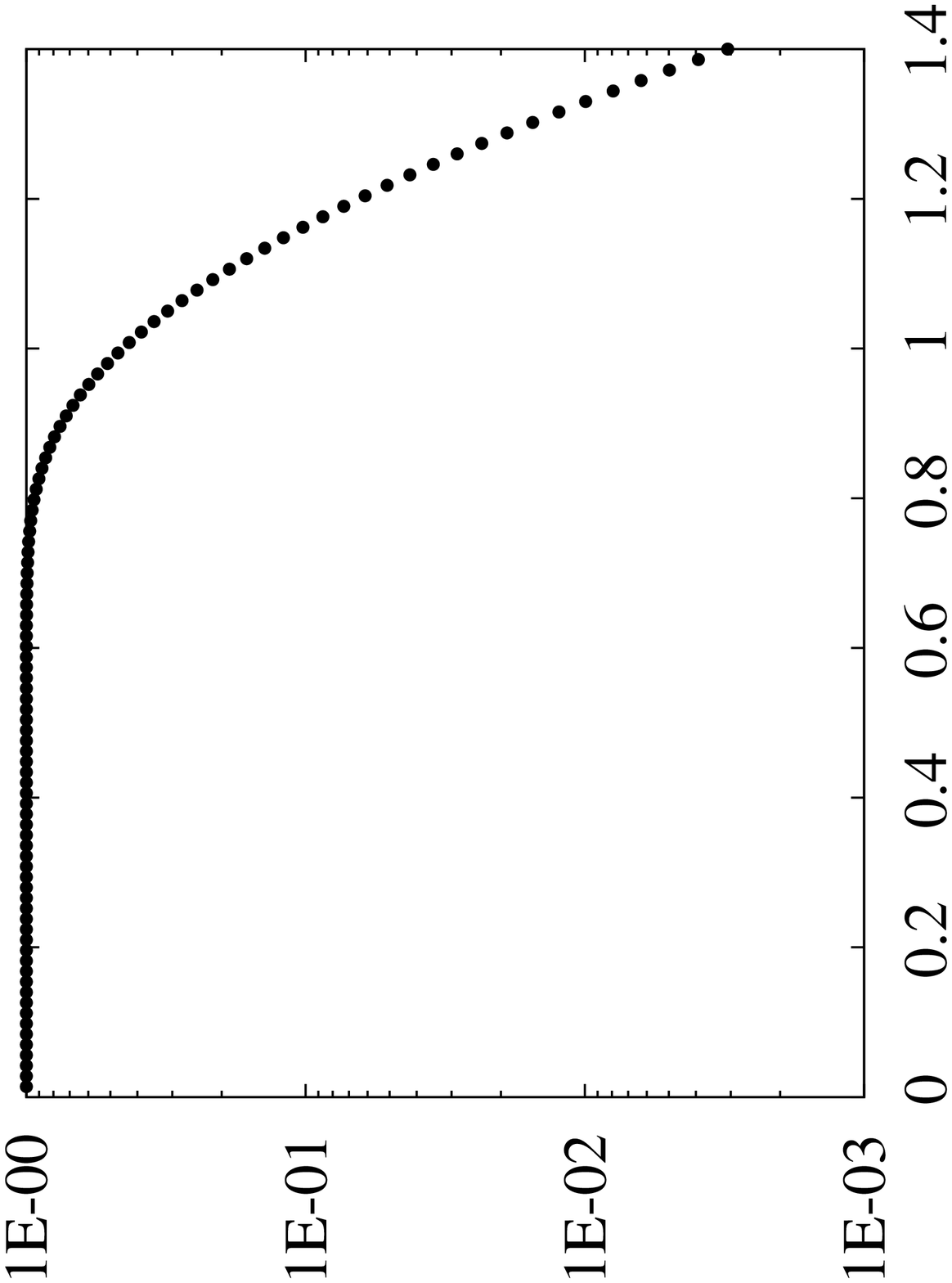}}}}
\\\vspace{.1in}
\hspace{.22in}
\subfloat[]{
\hspace{-.72in}\rotatebox{-90}{\scalebox{.32}{\includegraphics{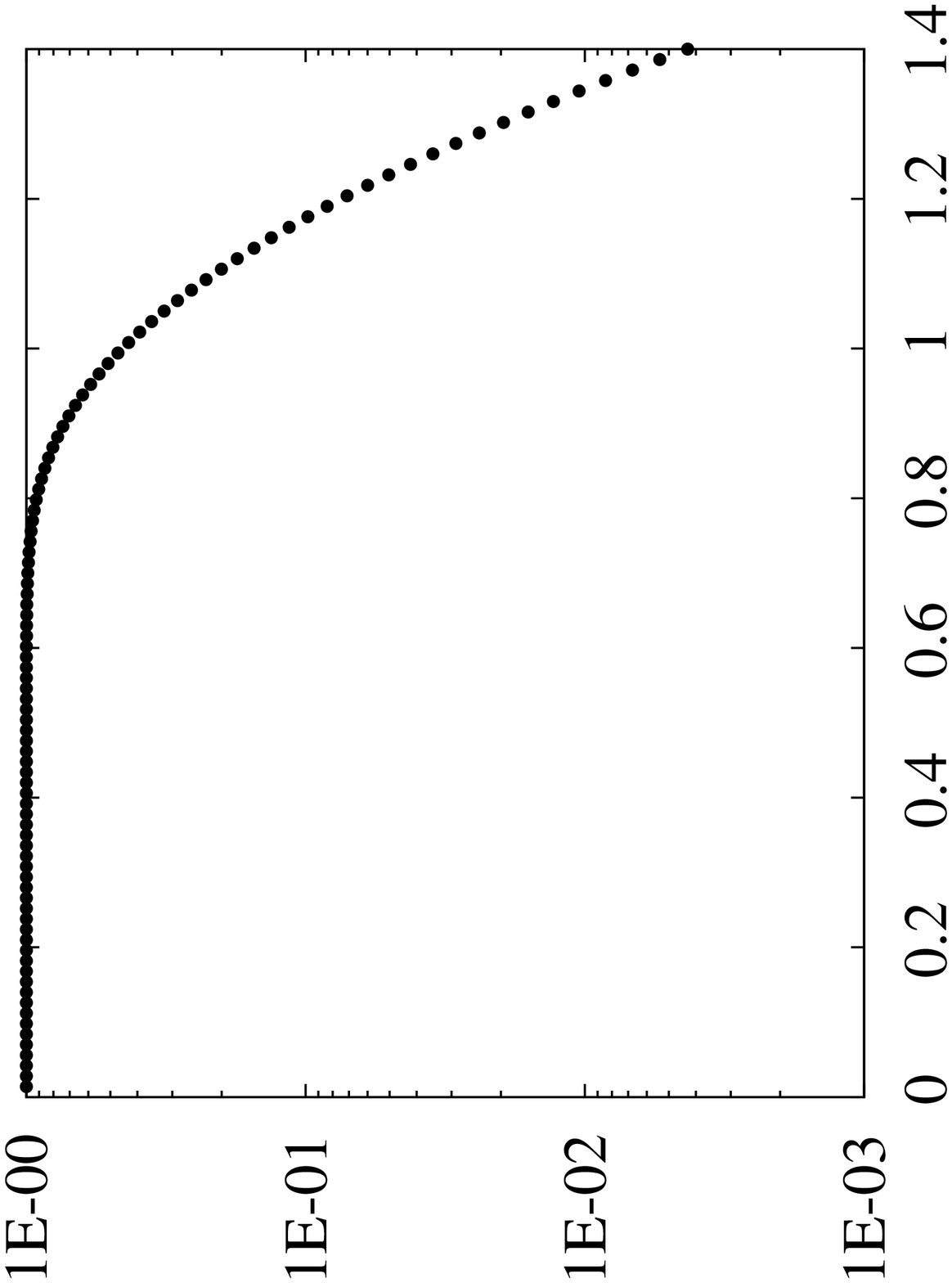}}}}
\hfill
\subfloat[]{
\hspace{-.72in}\rotatebox{-90}{\scalebox{.32}{\includegraphics{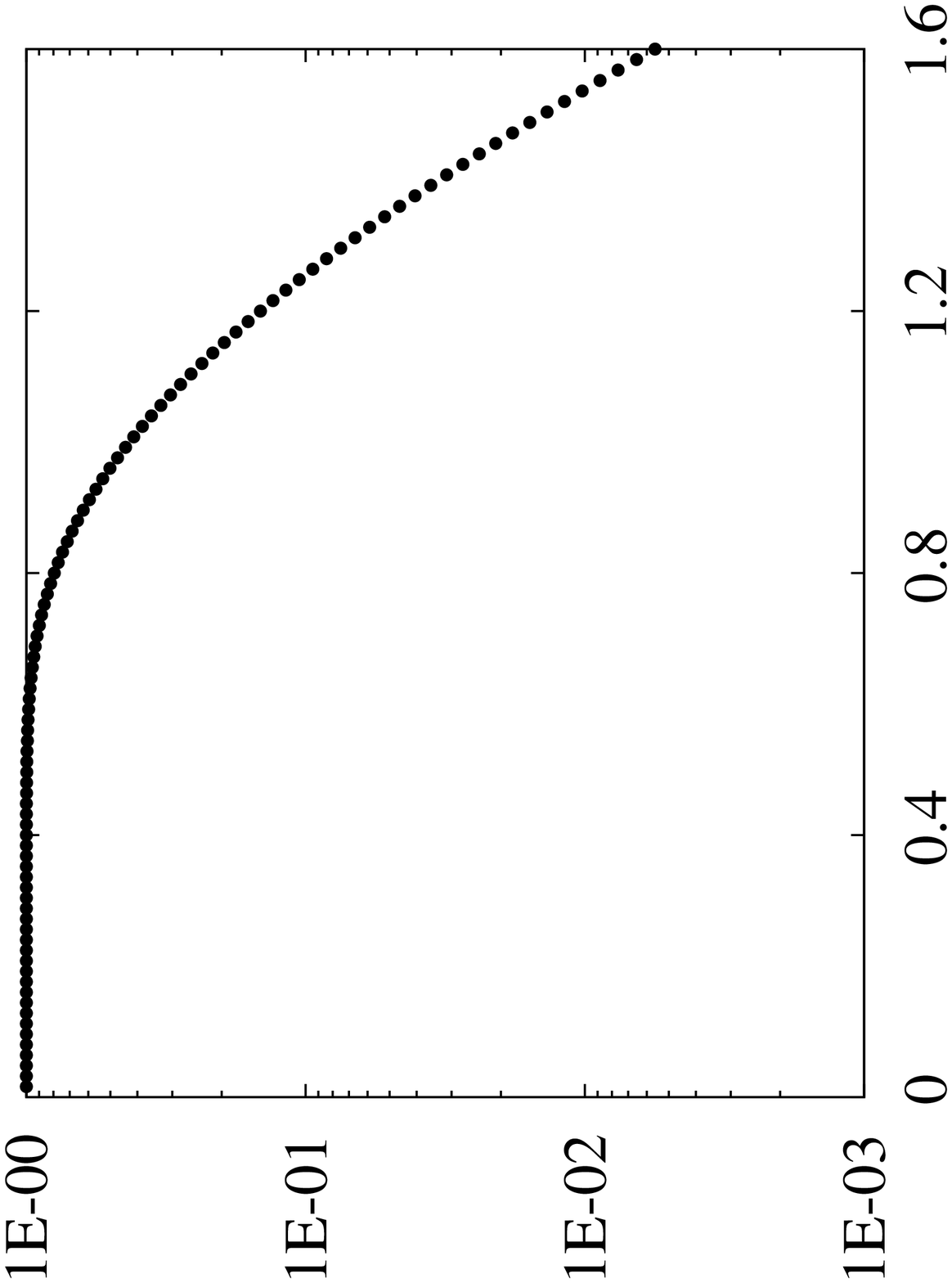}}}}
\\\vspace{.1in}
\hspace{.22in}
\subfloat[]{
\hspace{-.72in}\rotatebox{-90}{\scalebox{.32}{\includegraphics{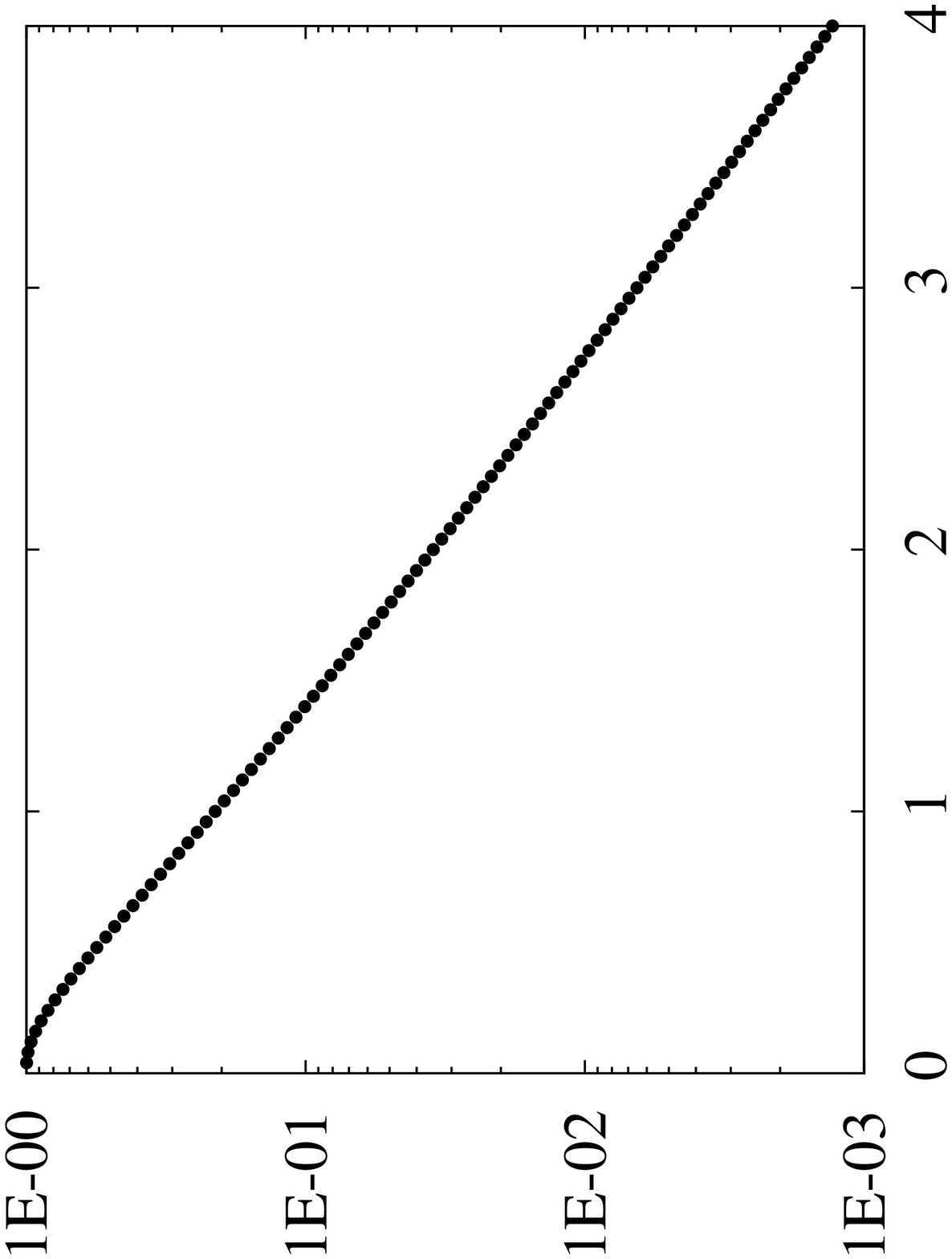}}}}
\hfill
\subfloat[]{
\hspace{-.72in}\rotatebox{-90}{\scalebox{.32}{\includegraphics{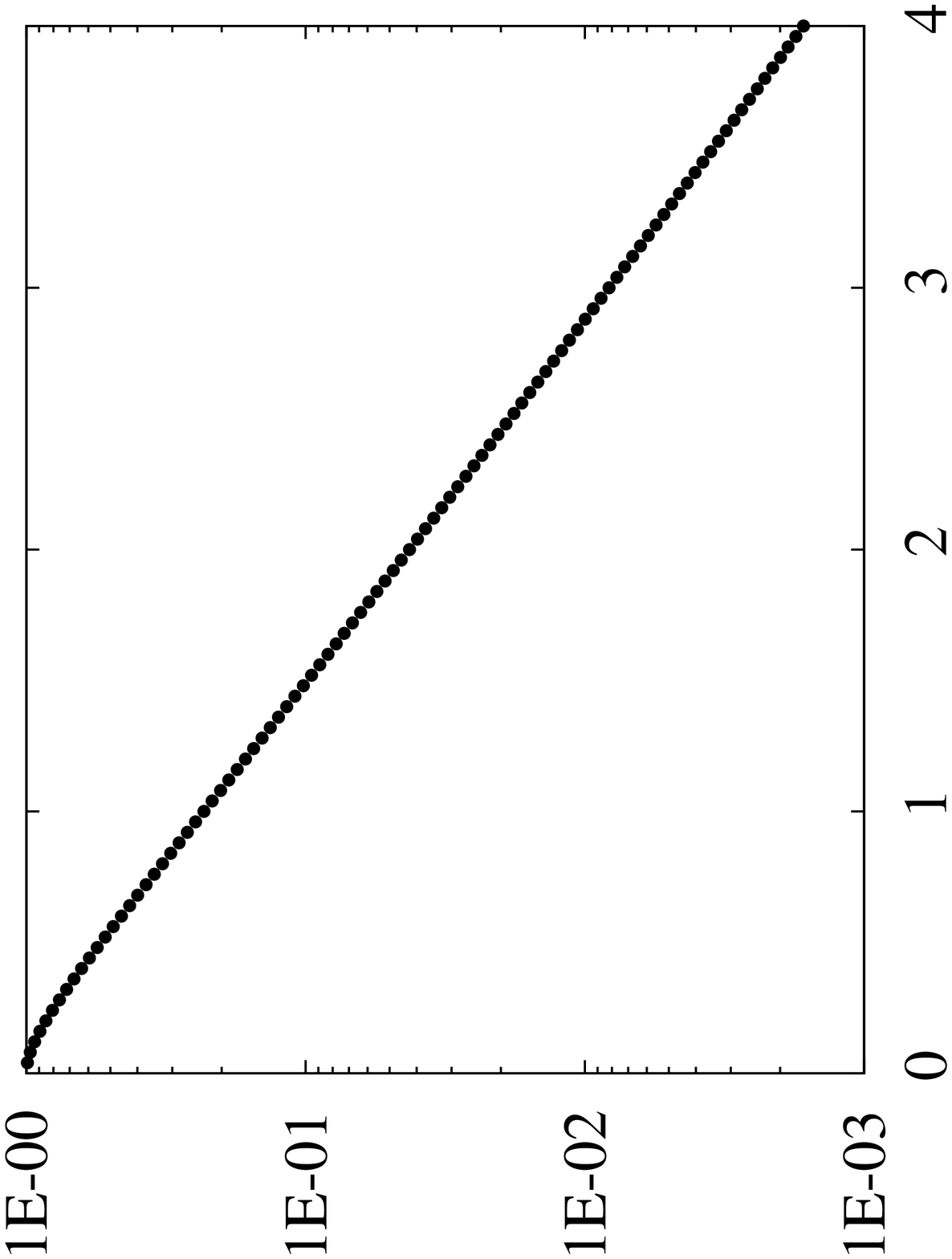}}}}
\\\vspace{.15in}
Fig.~1: The vertical axis is $1-P(x)$ from~(\ref{contoured});
the horizontal axis is $x$.
\end{center}
\end{figure}

\begin{figure}
\begin{center}
\hspace{.22in}
\subfloat[]{
\hspace{-.72in}\rotatebox{-90}{\scalebox{.32}{\includegraphics{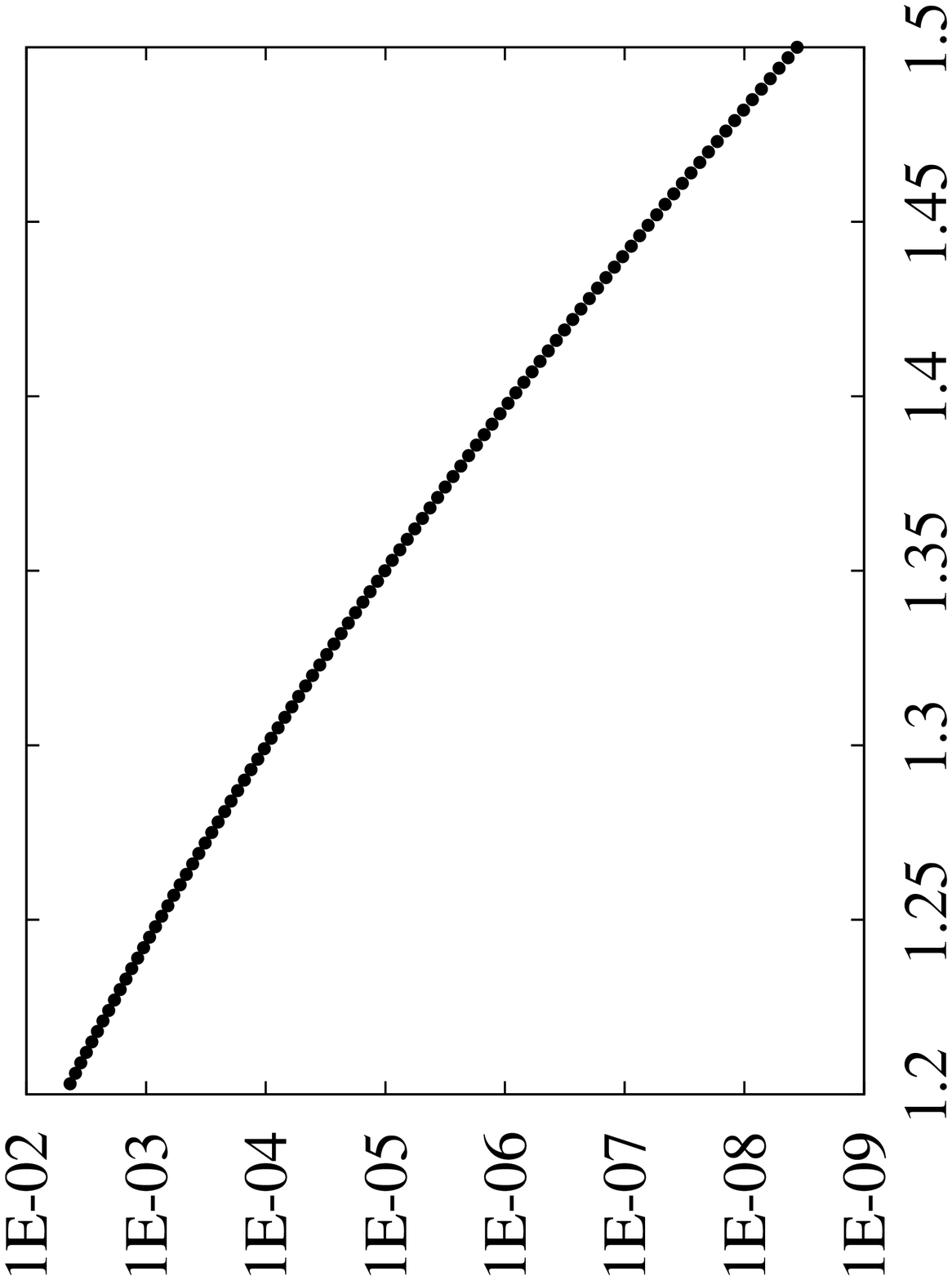}}}}
\hfill
\subfloat[]{
\hspace{-.72in}\rotatebox{-90}{\scalebox{.32}{\includegraphics{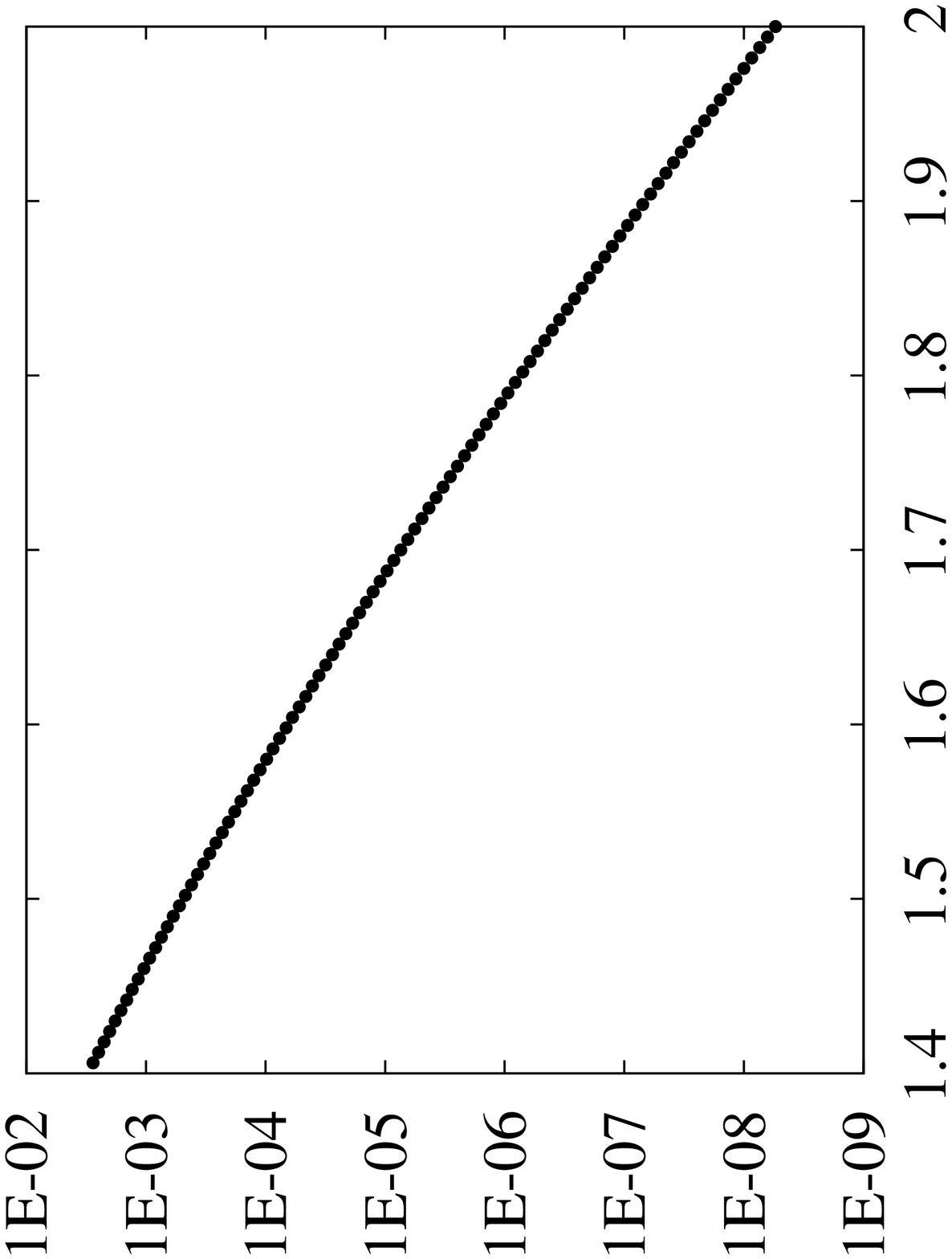}}}}
\\\vspace{.1in}
\hspace{.22in}
\subfloat[]{
\hspace{-.72in}\rotatebox{-90}{\scalebox{.32}{\includegraphics{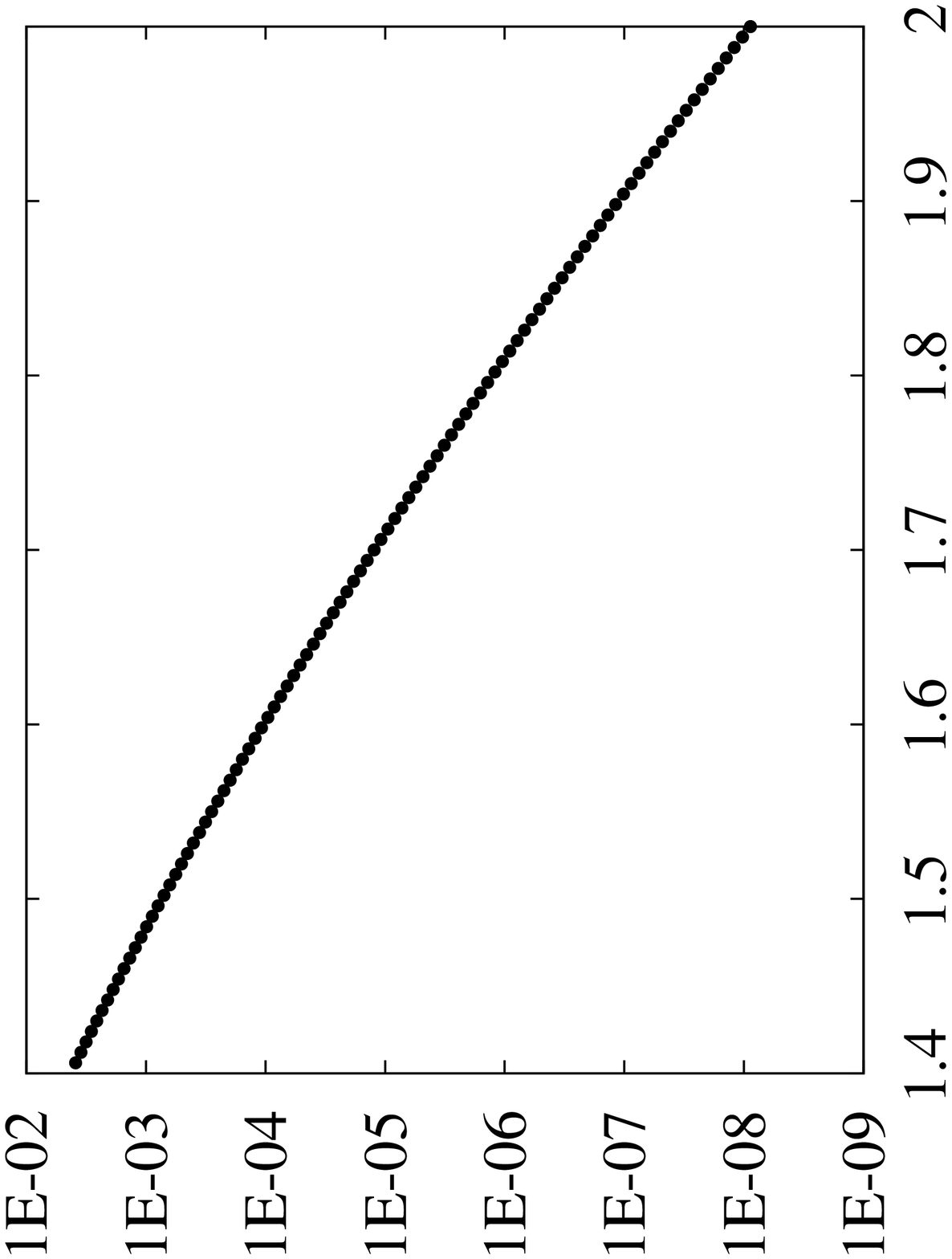}}}}
\hfill
\subfloat[]{
\hspace{-.72in}\rotatebox{-90}{\scalebox{.32}{\includegraphics{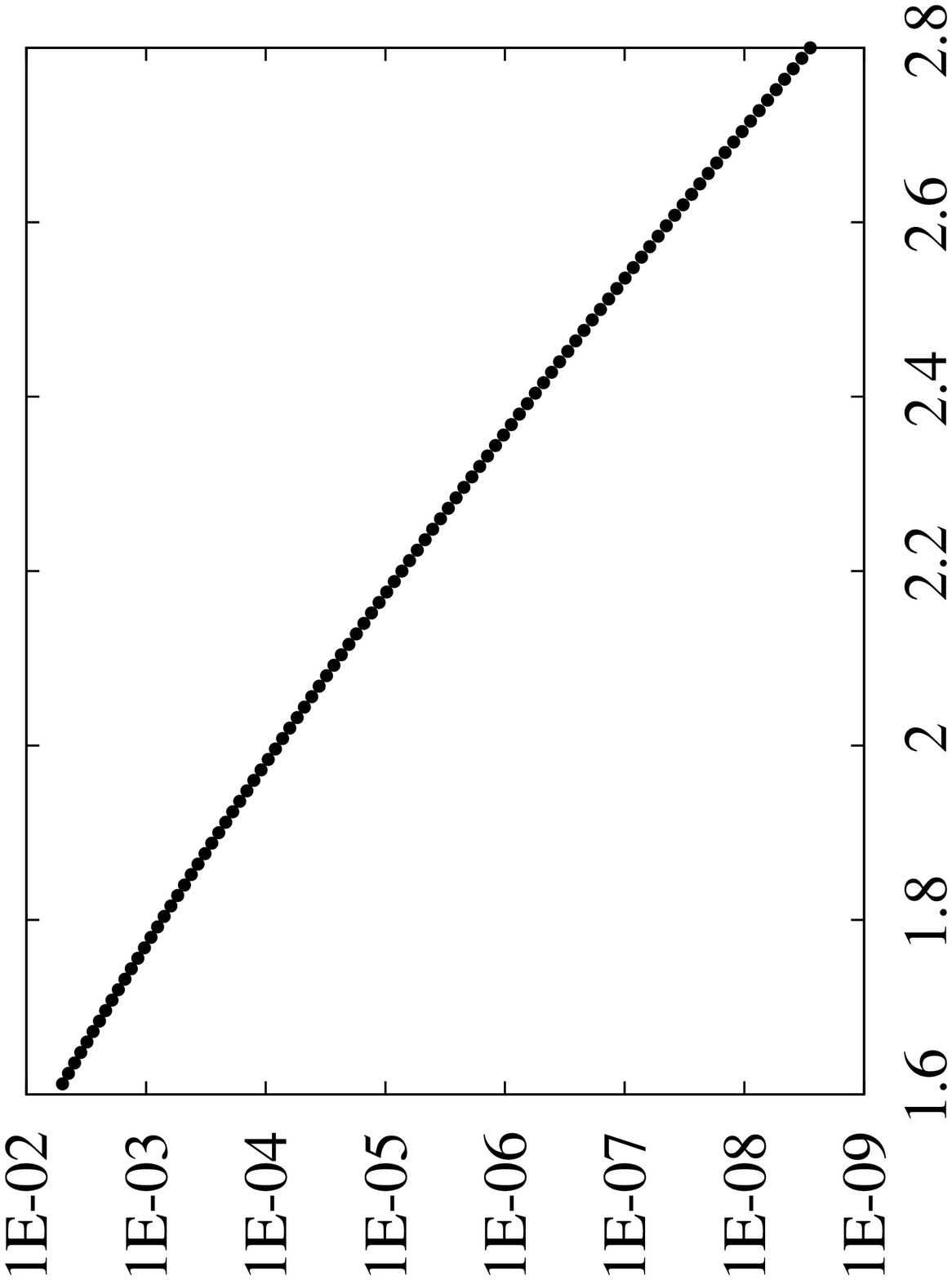}}}}
\\\vspace{.1in}
\hspace{.22in}
\subfloat[]{
\hspace{-.72in}\rotatebox{-90}{\scalebox{.32}{\includegraphics{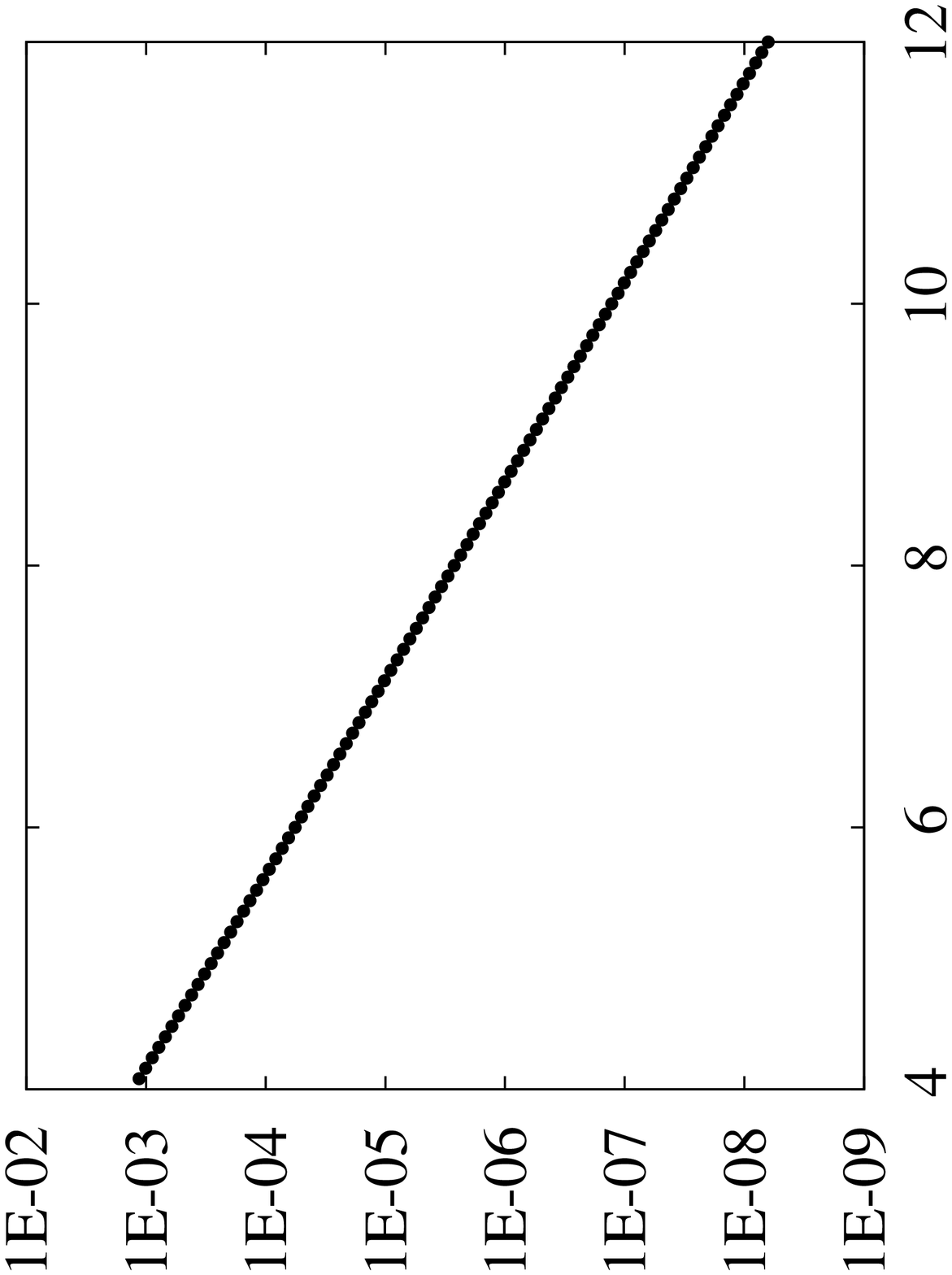}}}}
\hfill
\subfloat[]{
\hspace{-.72in}\rotatebox{-90}{\scalebox{.32}{\includegraphics{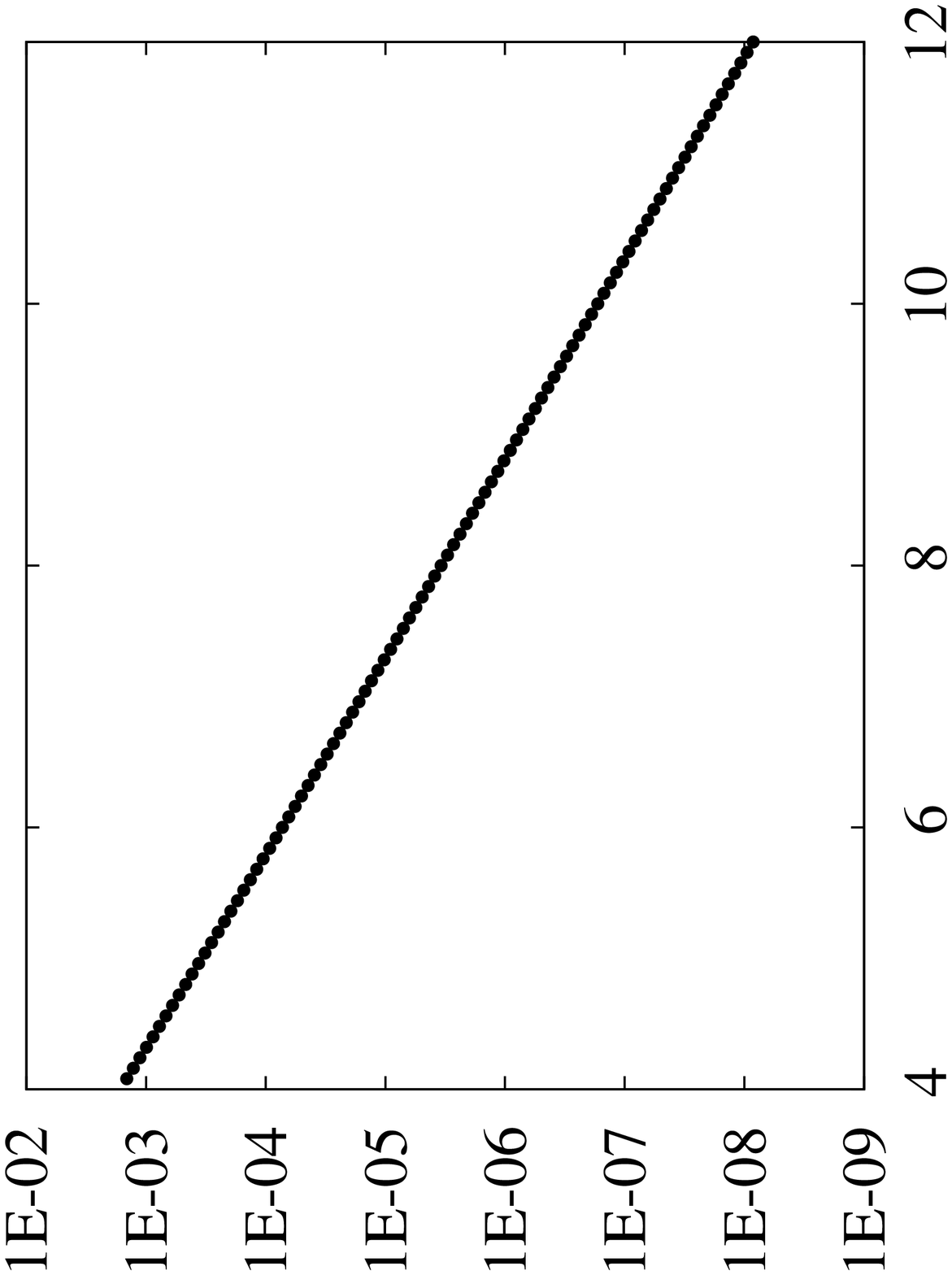}}}}
\\\vspace{.15in}
Fig.~2: The vertical axis is $1-P(x)$ from~(\ref{contoured});
the horizontal axis is $x$.
\end{center}
\end{figure}

\begin{table}
\caption{Values for Figure~1}
\label{tab2}
\begin{center}
\begin{tabular}{crcc}
    & $n$\,\, & $l$ & $t$ \\\hline
(a) & 500 & 310 & 5.0 \\
(b) & 250 & 270 & 2.4 \\
(c) & 100 & 250 & 0.9 \\
(d) &  50 & 250 & 0.5 \\
(e) &  25 & 330 & 0.3 \\
(f) &  10 & 270 & 0.1
\end{tabular}
\end{center}
\end{table}

\begin{table}
\caption{Values for Figure~2}
\label{tab1}
\begin{center}
\begin{tabular}{crcc}
    & $n$\,\, & $l$ & $t$ \\\hline
(a) & 500 & 310 & 5.7 \\
(b) & 250 & 330 & 3.0 \\
(c) & 100 & 270 & 1.0 \\
(d) &  50 & 290 & 0.6 \\
(e) &  25 & 350 & 0.4 \\
(f) &  10 & 270 & 0.2
\end{tabular}
\end{center}
\end{table}

\begin{table}
\caption{Values for both Figure~1 and Figure~2}
\label{tab3}
\begin{center}
\begin{tabular}{ccrccc}
    && $n$\,\, &&& $p_k$ \\\hline\\
(a) && 500 &&& $C_{\rm (a)} \cdot (300+k)^{-2}$ \\\\
(b) && 250 &&& $C_{\rm (b)} \cdot (260-k)^3$ \\\\
(c) && 100 &&& $C_{\rm (c)} \cdot \lfloor (40+k)/40 \rfloor^{-1/6}$ \\\\
(d) &&  50 &&& $C_{\rm (d)} \cdot (1/2 + \ln\lfloor (61-k)/10 \rfloor)$ \\\\
(e) &&  25 &&& $C_{\rm (e)} \cdot \exp(-5k/8)$ \\\\
(f) &&  10 &&& $C_{\rm (f)} \cdot \exp(-(k-1)^2/6)$
\end{tabular}
\end{center}
\end{table}

\section{The power of the root-mean-square}
\label{brief}

This section very briefly compares the statistic defined in~(\ref{statistic})
and the classic $\chi^2$ statistic defined in~(\ref{classic}).
This abbreviated comparison is in no way complete;
a much more comprehensive treatment constitutes a forthcoming article.

We will discuss four statistics in all --- the root-mean-square,
$\chi^2$, the (log)likelihood-ratio, and the Freeman-Tukey
or Hellinger distance.
We use $p_1$,~$p_2$, \dots, $p_{n-1}$,~$p_n$ to denote the expected fractions
of the $m$ i.i.d.\ draws falling in each of the $n$ bins,
and $Y_1$,~$Y_2$, \dots, $Y_{n-1}$,~$Y_n$ to denote the observed fractions
of the $m$ draws falling in the $n$ bins.
That is, $p_1$,~$p_2$, \dots, $p_{n-1}$,~$p_n$ are the probabilities
associated with the $n$ bins in the model distribution,
whereas $Y_1$,~$Y_2$, \dots, $Y_{n-1}$,~$Y_n$ are the fractions
of the $m$ draws falling in the $n$ bins when we take the draws
from a certain ``actual'' distribution that may differ from the model.

With this notation, the square of the root-mean-square statistic is
\begin{equation}
\label{rms}
X = m \sum_{k=1}^n (Y_k - p_k)^2.
\end{equation}
We use the designation ``root-mean-square'' to label the lines associated
with $X$ in the plots below.

The classic Pearson $\chi^2$ statistic is
\begin{equation}
\label{chi2}
\chi^2 = m \sum_{k=1}^n \frac{(Y_k - p_k)^2}{p_k}.
\end{equation}
We use the standard designation ``$\chi^2$'' to label the lines associated
with $\chi^2$ in the plots below.

The (log)likelihood-ratio or ``$G^2$'' statistic is
\begin{equation}
\label{lr}
G^2 = 2m \sum_{k=1}^n Y_k \, \ln\left( \frac{Y_k}{p_k} \right),
\end{equation}
under the convention that $Y_k \, \ln(Y_k/p_k) = 0$ if $Y_k = 0$.
We use the common designation ``$G^2$'' to label the lines associated
with $G^2$ in the plots below.

The Freeman-Tukey or Hellinger-distance statistic is
\begin{equation}
\label{freeman-tukey}
H^2 = 4m \sum_{k=1}^n (\sqrt{Y_k} - \sqrt{p_k})^2.
\end{equation}
We use the well-known designation ``Freeman-Tukey''
to label the lines associated with $H^2$ in the plots below.

In the limit that the number $m$ of draws is large,
the distributions of $\chi^2$ defined in~(\ref{chi2}),
$G^2$ defined in~(\ref{lr}), and $H^2$ defined~(\ref{freeman-tukey})
are all the same when the actual distribution of the draws is identical
to the model (see, for example, \cite{rao}).
However, when the number $m$ of draws is not large,
then their distributions can differ substantially.
In this section, we compute confidence levels via Monte Carlo simulations,
without relying on the number $m$ of draws to be large.

\begin{remark}
\label{distinguish}
Below, we say that a statistic based on given i.i.d.\ draws
``distinguishes'' the actual distribution of the draws
from the model distribution to mean that the computed confidence level is
at least $99\%$ for 99\% of 40,000 simulations,
with each simulation generating $m$ i.i.d.\ draws according
to the actual distribution.
We computed the confidence levels by conducting 40,000 simulations,
each generating $m$ i.i.d.\ draws according to the model distribution.
\end{remark}

\subsection{First example}
\label{firstex}

Let us first specify the model distribution to be
\begin{equation}
\label{first}
p_1 = \frac{1}{4},
\end{equation}
\begin{equation}
p_2 = \frac{1}{4},
\end{equation}
\begin{equation}
\label{last}
p_k = \frac{1}{2n-4}
\end{equation}
for $k = 3$,~$4$, \dots, $n-1$,~$n$.
We consider $m$ i.i.d.\ draws from the distribution
\begin{equation}
\label{alt_0}
\tilde{p}_1 = \frac{3}{8},
\end{equation}
\begin{equation}
\tilde{p}_2 = \frac{1}{8},
\end{equation}
\begin{equation}
\label{alt_00}
\tilde{p}_k = p_k
\end{equation}
for $k = 3$,~$4$, \dots, $n-1$,~$n$,
where $p_3$, $p_4$, \dots, $p_{n-1}$,~$p_n$ are the same as in~(\ref{last}).

Figure~3 plots the percentage of 40,000 simulations,
each generating 200 i.i.d.\ draws according
to the actual distribution defined in~(\ref{alt_0})--(\ref{alt_00}),
that are successfully detected as not arising from the model distribution
at the $1\%$ significance level (meaning that the associated statistic
for the simulation yields a confidence level of $99\%$ or greater).
We computed the significance levels by conducting 40,000 simulations,
each generating 200 i.i.d.\ draws according
to the model distribution defined in~(\ref{first})--(\ref{last}).
Figure~3 shows that the root-mean-square is successful
in at least 99\% of the simulations,
while the classic $\chi^2$ statistic fails often,
succeeding in only 81\% of the simulations for $n = 16$,
and less than 5\% for $n \ge 256$.

Figure~4 plots the number $m$ of draws required to distinguish
the actual distribution defined in~(\ref{alt_0})--(\ref{alt_00})
from the model distribution defined in~(\ref{first})--(\ref{last}).
Remark~\ref{distinguish} above specifies what we mean by ``distinguish.''
Figure~4 shows that the root-mean-square requires
only about $m = 185$ draws for any number $n$ of bins,
while the classic $\chi^2$ statistic requires
$90\%$ more draws for $n = 16$, and greater than $300\%$ more for $n \ge 128$.
Furthermore, the classic $\chi^2$ statistic requires increasingly many draws
as the number $n$ of bins increases, unlike the root-mean-square.

\begin{figure}
\begin{center}
\rotatebox{-90}{\scalebox{.47}{\includegraphics{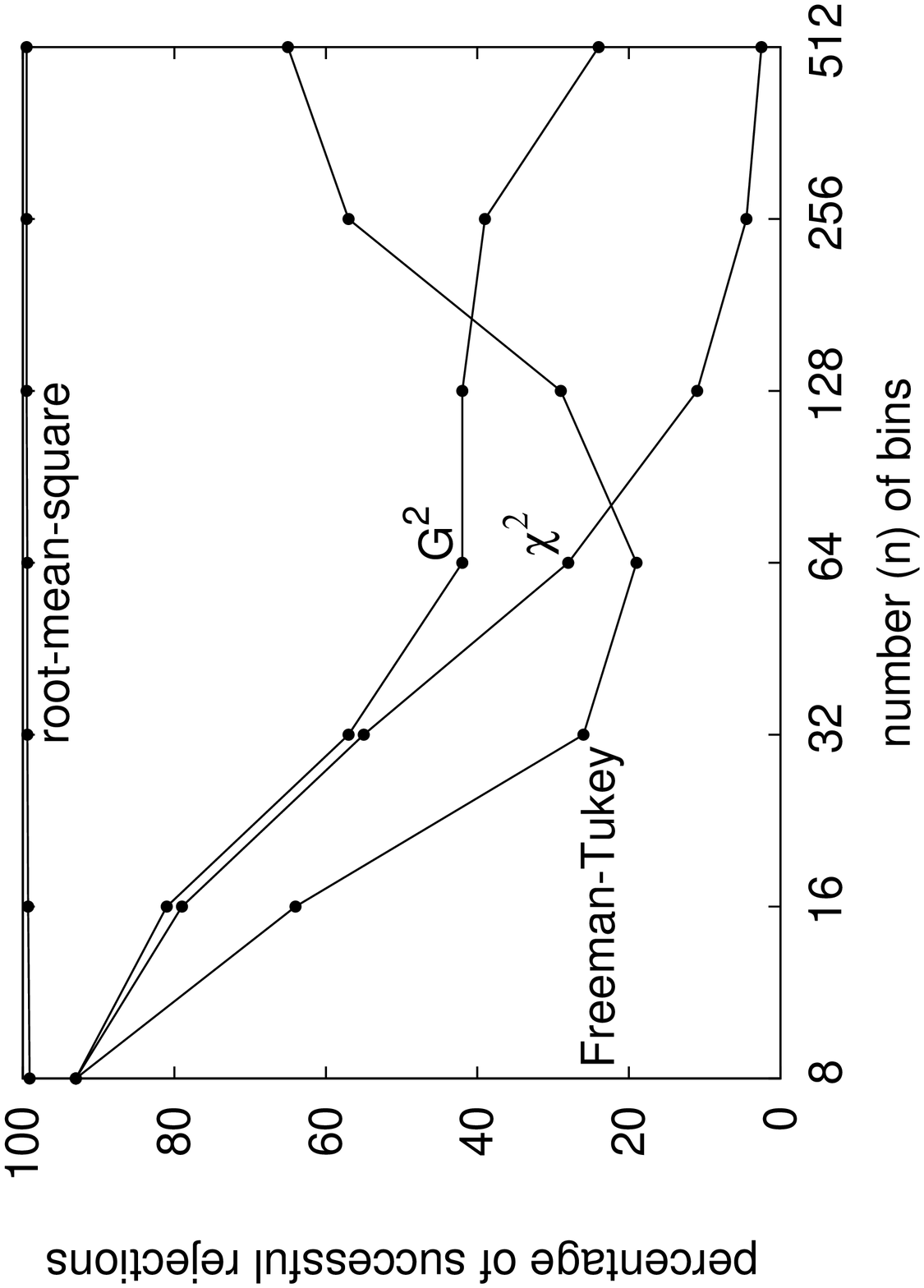}}}
\\\vspace{.15in}
Fig.~3: First example (rate of success); see Subsection~\ref{firstex}.
\end{center}
\end{figure}

\begin{figure}
\begin{center}
\rotatebox{-90}{\scalebox{.47}{\includegraphics{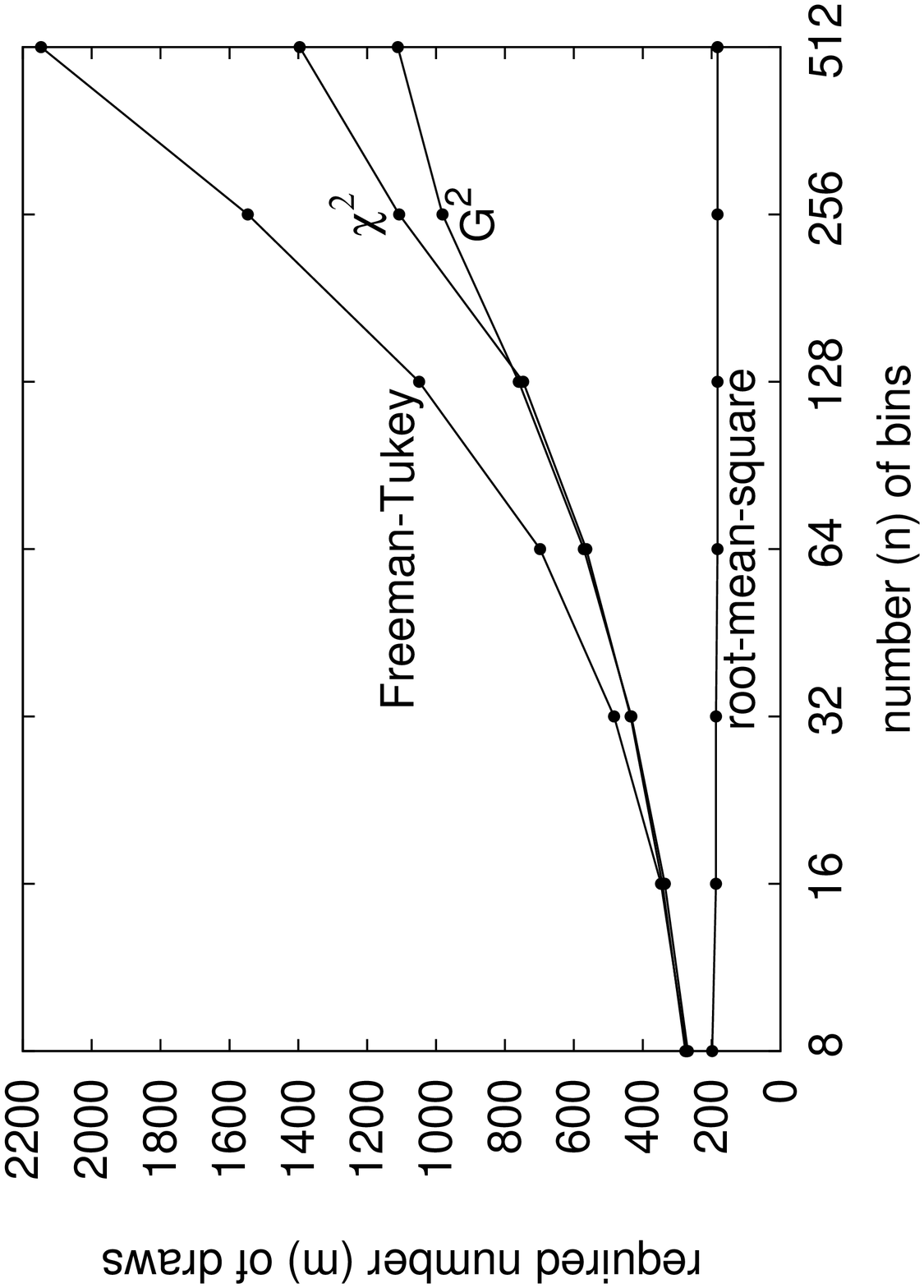}}}
\\\vspace{.15in}
Fig.~4: First example (statistical ``efficiency'');
see Subsection~\ref{firstex}.
\end{center}
\end{figure}

\subsection{Second example}
\label{secondex}

Next, let us specify the model distribution to be
\begin{equation}
\label{Zipf1}
p_k = \frac{C_1}{k}
\end{equation}
for $k = 1$,~$2$, \dots, $n-1$,~$n$, where
\begin{equation}
\label{const1}
C_1 = \frac{1}{\sum_{k=1}^n 1/k}.
\end{equation}
We consider $m$ i.i.d.\ draws from the distribution
\begin{equation}
\label{Zipf2}
\tilde{p}_k = \frac{C_2}{k^2}
\end{equation}
for $k = 1$,~$2$, \dots, $n-1$,~$n$, where
\begin{equation}
\label{const2}
C_2 = \frac{1}{\sum_{k=1}^n 1/k^2}.
\end{equation}

Figure~5 plots the number $m$ of draws required to distinguish
the actual distribution defined in~(\ref{Zipf2}) and~(\ref{const2})
from the model distribution defined in~(\ref{Zipf1}) and~(\ref{const1}).
Remark~\ref{distinguish} above specifies what we mean by ``distinguish.''
Figure~5 shows that the classic $\chi^2$ statistic requires increasingly many
draws as the number $n$ of bins increases,
while the root-mean-square exhibits the opposite behavior.

\begin{figure}
\begin{center}
\rotatebox{-90}{\scalebox{.47}{\includegraphics{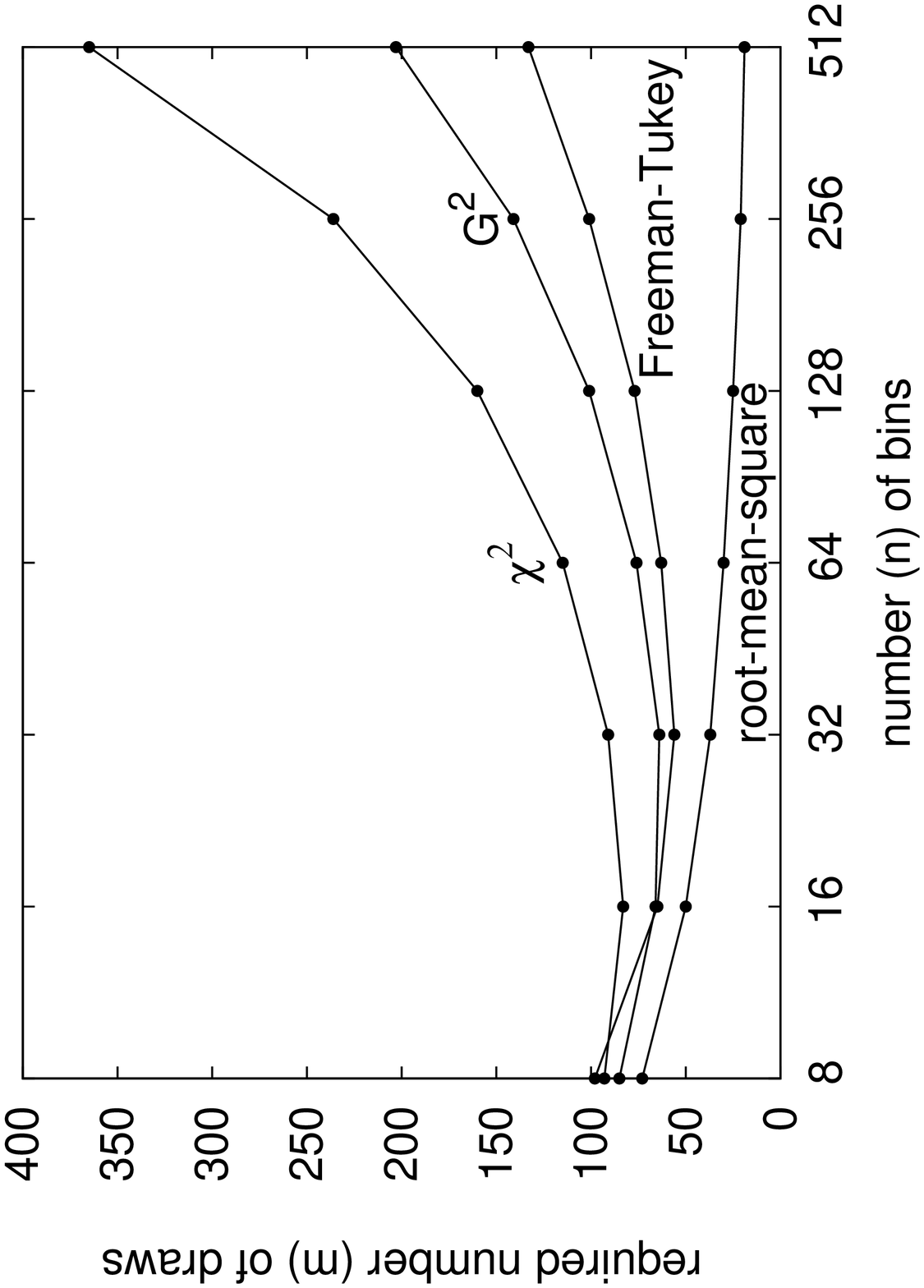}}}
\\\vspace{.15in}
Fig.~5: Second example; see Subsection~\ref{secondex}.
\end{center}
\end{figure}

\subsection{Third example}
\label{thirdex}

Let us again specify the model distribution to be
\begin{equation}
\label{Zipf11}
p_k = \frac{C_1}{k}
\end{equation}
for $k = 1$,~$2$, \dots, $n-1$,~$n$, where
\begin{equation}
\label{const11}
C_1 = \frac{1}{\sum_{k=1}^n 1/k}.
\end{equation}
We now consider $m$ i.i.d.\ draws from the distribution
\begin{equation}
\label{Zipf12}
\tilde{p}_k = \frac{C_{1/2}}{\sqrt{k}}
\end{equation}
for $k = 1$,~$2$, \dots, $n-1$,~$n$, where
\begin{equation}
\label{const12}
C_{1/2} = \frac{1}{\sum_{k=1}^n 1/\sqrt{k}}.
\end{equation}

Figure~6 plots the number $m$ of draws required to distinguish
the actual distribution defined in~(\ref{Zipf12}) and~(\ref{const12})
from the model distribution defined in~(\ref{Zipf11}) and~(\ref{const11}).
Remark~\ref{distinguish} above specifies what we mean by ``distinguish.''
The root-mean-square is not uniformly more powerful than the other statistics
in this example; see Remark~\ref{alternatives} below.

\begin{figure}
\begin{center}
\rotatebox{-90}{\scalebox{.47}{\includegraphics{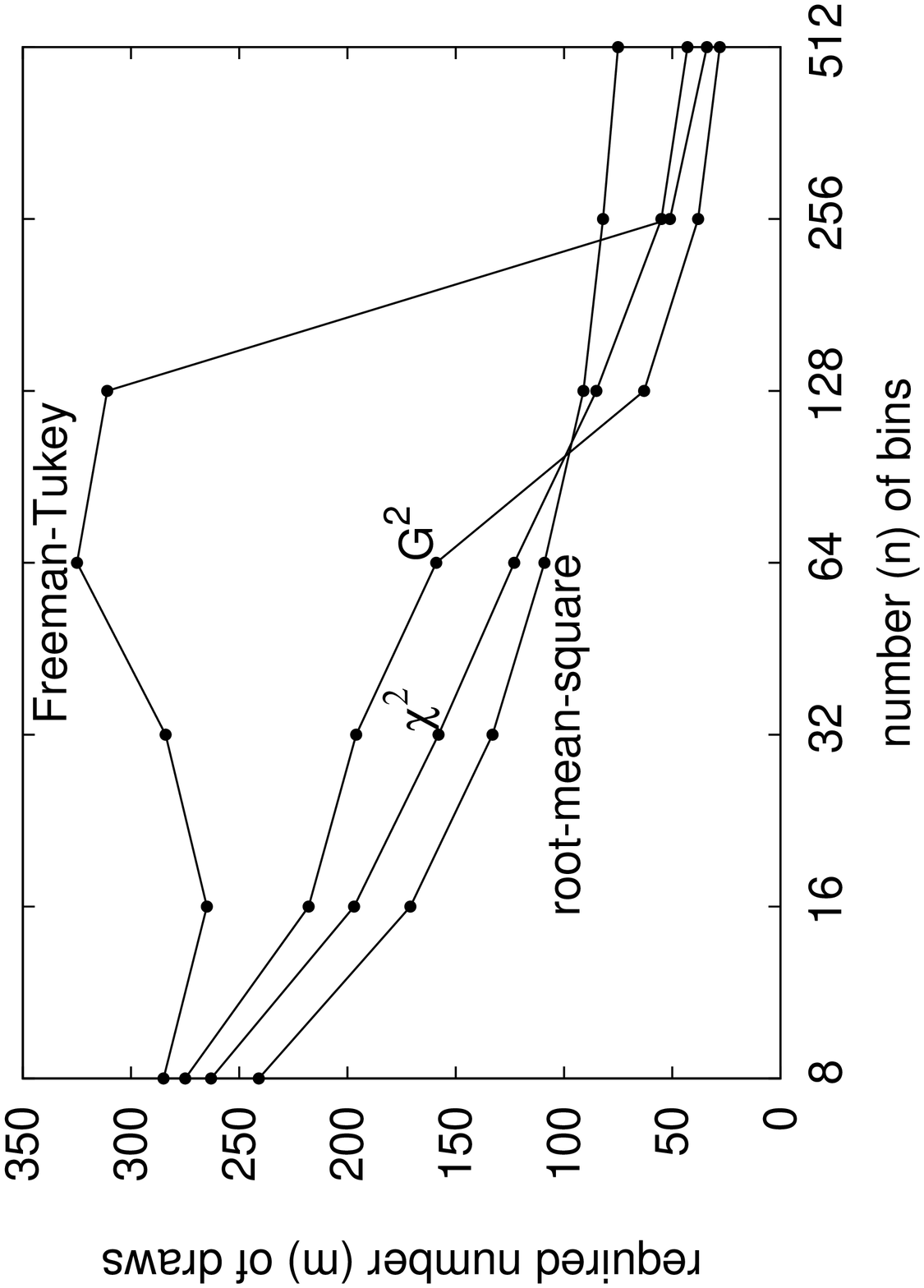}}}
\\\vspace{.15in}
Fig.~6: Third example; see Subsection~\ref{thirdex}.
\end{center}
\end{figure}

\subsection{Fourth example}
\label{fourthex}

We turn now to models involving parameter estimation
(for details, see~\cite{perkins-tygert-ward}).
Let us specify the model distribution to be the Zipf distribution
\begin{equation}
\label{Zipftheta}
p_k(\theta) = \frac{C_{\theta}}{k^{\theta}}
\end{equation}
for $k = 1$,~$2$, \dots, $99$,~$100$, where
\begin{equation}
\label{consttheta}
C_{\theta} = \frac{1}{\sum_{k=1}^{100} 1/k^{\theta}};
\end{equation}
we estimate the parameter $\theta$ via maximum-likelihood methods
(see~\cite{perkins-tygert-ward}).
We consider $m$ i.i.d.\ draws from the (truncated) geometric distribution
\begin{equation}
\label{geom}
\tilde{p}_k = c_t \, t^k
\end{equation}
for $k = 1$,~$2$, \dots, $99$,~$100$, where
\begin{equation}
\label{constt}
c_t = \frac{1}{\sum_{k=1}^{100} t^k};
\end{equation}
Figure~7 considers several values for $t$.

Figure~7 plots the number $m$ of draws required to distinguish
the actual distribution defined in~(\ref{geom}) and~(\ref{constt})
from the model distribution defined in~(\ref{Zipftheta}) and~(\ref{consttheta}),
estimating the parameter $\theta$ in~(\ref{Zipftheta}) and~(\ref{consttheta})
via maximum-likelihood methods.
Remark~\ref{distinguish} above specifies what we mean by ``distinguish.''

\begin{figure}
\begin{center}
\rotatebox{-90}{\scalebox{.47}{\includegraphics{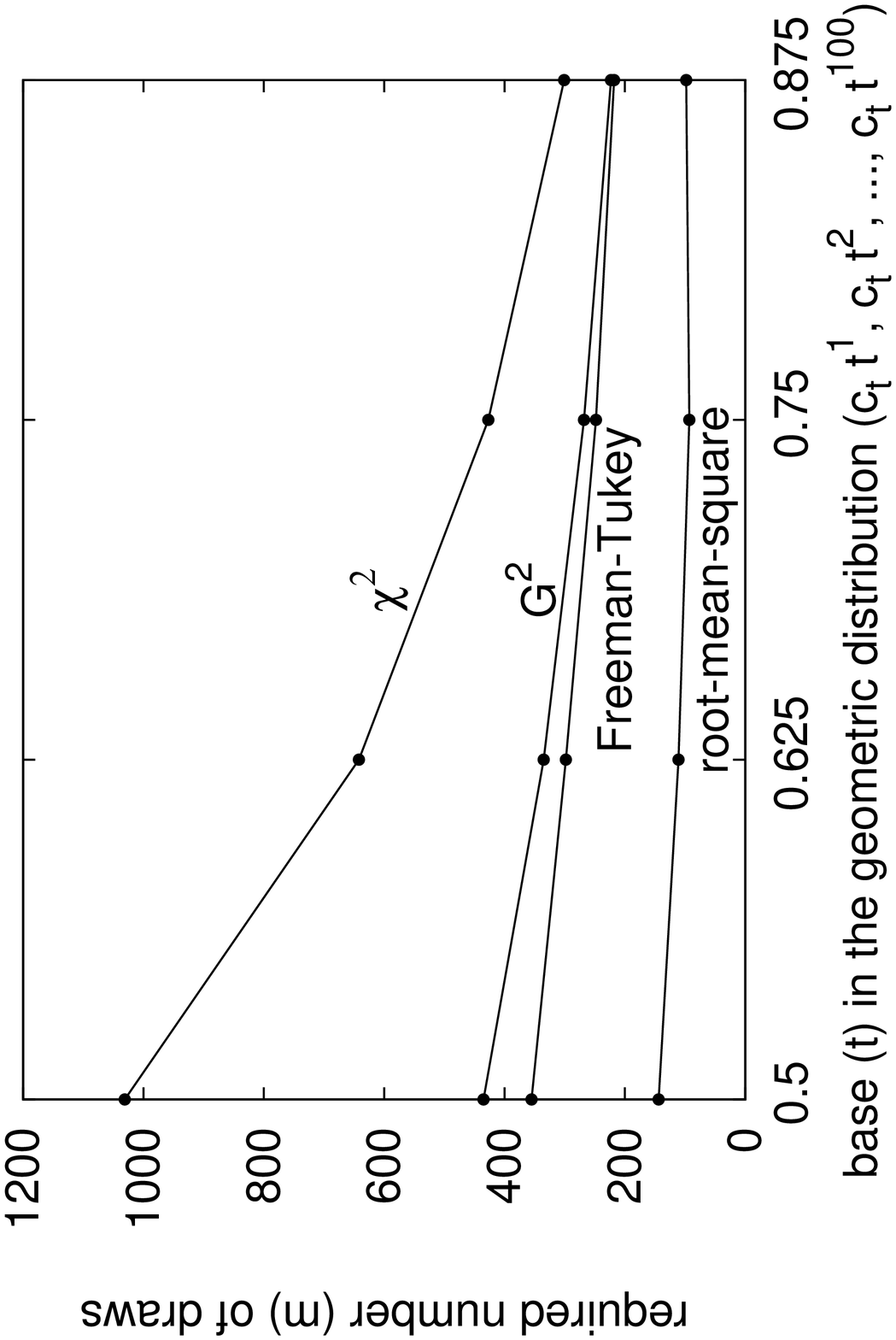}}}
\\\vspace{.15in}
Fig.~7: Fourth example; see Subsection~\ref{fourthex}.
\end{center}
\end{figure}

\subsection{Fifth example}
\label{fifthex}

The model for our final example involves parameter estimation, too
(for details, see~\cite{perkins-tygert-ward}).
Let us specify the model distribution to be
\begin{equation}
\label{geomalt1}
p_k(\theta) = \theta^{k-1} (1-\theta)
\end{equation}
for $k = 1$,~$2$, \dots, $98$,~$99$, and
\begin{equation}
\label{geomalt2}
p_{100}(\theta) = \theta^{99};
\end{equation}
we estimate the parameter $\theta$ via maximum-likelihood methods
(see~\cite{perkins-tygert-ward}).
We consider $m$ i.i.d.\ draws from the Zipf distribution
\begin{equation}
\label{Zipft}
\tilde{p}_k = \frac{C_t}{k^t}
\end{equation}
for $k = 1$,~$2$, \dots, $99$,~$100$, where
\begin{equation}
\label{const}
C_t = \frac{1}{\sum_{k=1}^{100} 1/k^t};
\end{equation}
Figure~8 considers several values for $t$.

Figure~8 plots the number $m$ of draws required to distinguish
the actual distribution defined in~(\ref{Zipft}) and~(\ref{const})
from the model distribution defined in~(\ref{geomalt1}) and~(\ref{geomalt2}),
estimating the parameter $\theta$ in~(\ref{geomalt1}) and~(\ref{geomalt2})
via maximum-likelihood methods.
Remark~\ref{distinguish} above specifies what we mean by ``distinguish.''

\begin{figure}
\begin{center}
\rotatebox{-90}{\scalebox{.47}{\includegraphics{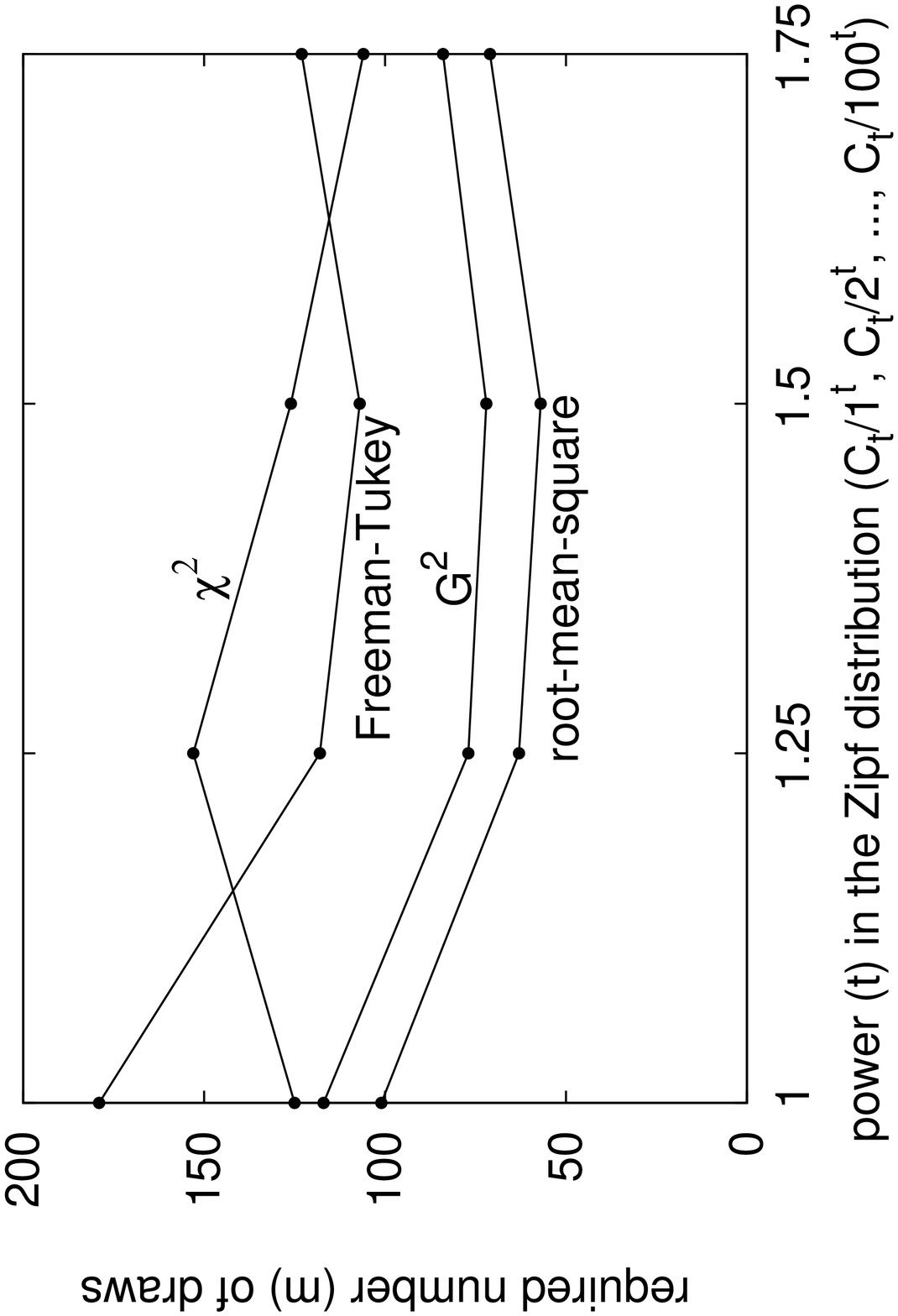}}}
\\\vspace{.15in}
Fig.~8: Fifth example; see Subsection~\ref{fifthex}.
\end{center}
\end{figure}

\begin{remark}
\label{alternatives}
The root-mean-square statistic is not very sensitive to relative discrepancies
between the model and actual distributions
in bins whose associated model probabilities are small.
When sensitivity in these bins is desirable,
we recommend using both the root-mean-square statistic
defined in~(\ref{statistic}) and an asymptotically equivalent variation
of $\chi^2$ defined in~(\ref{classic}), such as the (log)likelihood-ratio
or ``$G^2$'' test; see, for example, \cite{rao}.
\end{remark}

\section{Conclusions and generalizations}
\label{conclusion}

This paper provides efficient black-box algorithms
for computing the confidence levels for one
of the most natural goodness-of-fit statistics,
in the limit of large numbers of draws.
As mentioned briefly above (in Remark~\ref{estimation}),
our methods can handle model distributions
specified via the multinomial maximum-likelihood estimation
of parameters from the data; for details, see~\cite{perkins-tygert-ward}.
Moreover, we can handle model distributions with infinitely many bins;
for details, see Observation~1 in Section~4 of~\cite{perkins-tygert-ward}.
Furthermore, we can handle arbitrarily weighted means
in the root-mean-square, in addition to the usual, uniformly weighted average
considered above.
Finally, combining our methods and the statistical bootstrap
should produce a test for whether two separate sets of draws
arise from the same or from different distributions,
when each set is taken i.i.d.\ from some (unspecified) distribution associated
with the set (see, for example, \cite{efron-tibshirani}).

The natural statistic has many advantages over more standard $\chi^2$ tests,
as forthcoming papers will demonstrate.
The classic $\chi^2$ statistic for goodness-of-fit,
and especially variations such as the (log)likelihood-ratio, ``$G^2$,''
and power-divergence statistics (see~\cite{rao}), can be sensible supplements,
but are not good alternatives when used alone.
With the now widespread availability of computers,
calculating significance levels via Monte Carlo simulations
for the more natural statistic of the present article can be feasible;
the algorithms of the present paper can also be suitable,
and are efficient and easy-to-use.

\section*{Acknowledgements}

We would like to thank Tony Cai, Jianqing Fan, Peter W. Jones, Ron Peled,
and Vladimir Rokhlin for many helpful discussions.
We would like to thank the anonymous referees for their helpful suggestions.
William Perkins was supported in part by NSF Grant OISE-0730136.
Mark Tygert was supported in part by an Alfred P. Sloan Research Fellowship.
Rachel Ward was supported in part by an NSF Postdoctoral Research Fellowship.

\clearpage

\bibliographystyle{model1-num-names}
\bibliography{stat}

%% Authors are advised to submit their bibtex database files. They are
%% requested to list a bibtex style file in the manuscript if they do
%% not want to use model2-names.bst.

%% References without bibTeX database:

% \begin{thebibliography}{00}

%% \bibitem must have one of the following forms:
%%   \bibitem[Jones et al.(1990)]{key}...
%%   \bibitem[Jones et al.(1990)Jones, Baker, and Williams]{key}...
%%   \bibitem[Jones et al., 1990]{key}...
%%   \bibitem[\protect\citeauthoryear{Jones, Baker, and Williams}{Jones
%%       et al.}{1990}]{key}...
%%   \bibitem[\protect\citeauthoryear{Jones et al.}{1990}]{key}...
%%   \bibitem[\protect\astroncite{Jones et al.}{1990}]{key}...
%%   \bibitem[\protect\citename{Jones et al., }1990]{key}...
%%   \harvarditem[Jones et al.]{Jones, Baker, and Williams}{1990}{key}...
%%

% \bibitem[ ()]{}

% \end{thebibliography}

\end{document}